\def\denseformat{
\setlength{\textheight}{9.2in}
\setlength{\textwidth}{7in}
\setlength{\evensidemargin}{-0.2in}
\setlength{\oddsidemargin}{-0.2in}
\setlength{\headsep}{10pt}
\setlength{\topmargin}{-0.3in}
\setlength{\columnsep}{0.375in}
\setlength{\itemsep}{0pt}
}
\newtheorem{theorem}{Theorem}[section]
\theoremstyle{definition}
\theoremstyle{plain}
\newtheorem{lemma}[theorem]{Lemma}
\begin{document}

\title{Distributed Symmetry-Breaking Algorithms for \\ Congested Cliques\footnote{Open University of Israel. Email: leonidb@openu.ac.il ; viktorkh@gmail.com. This research has been supported by ISF grant 724/15 and Open University of Israel research fund.}}
%
%
\author{Leonid Barenboim \and Victor Khazanov
}
%
%
%

\maketitle              
\begin{abstract}
The {\em Congested Clique} is a distributed-computing model for single-hop networks with restricted bandwidth that has been very intensively studied recently. It models a network by an $n$-vertex graph in which any pair of vertices can communicate one with another by transmitting $O(\log n )$ bits in each round. Various problems have been studied in this setting, but for some of them the best-known results are those for general networks. For other problems, the results for Congested Cliques are better than on general networks, but still incure significant dependency on the number of vertices $n$. Hence the performance of these algorithms may become poor on large cliques, even though their diameter is just $1$. In this paper we devise significantly improved algorithms for various symmetry-breaking problems, such as forests-decompositions, vertex-colorings, and maximal independent set. 

We analyze the running time of our algorithms as a function of the arboricity $a$ of a clique subgraph that is given as input. The arboricity is always smaller than the number of vertices $n$ in the subgraph, and for many families of graphs it is significantly smaller. In particular, trees, planar graphs, graphs with constant genus, and many other graphs have bounded arboricity, but unbounded size. We obtain $O(a)$-forest-decomposition algorithm with $O(\log a)$ time that improves the previously-known $O(\log n)$ time, $O(a^{2 + \epsilon})$-coloring in $O(\log^* n)$ time that improves upon an $O(\log n)$-time algorithm, $O(a)$-coloring in $O(a^{\epsilon})$-time that improves upon several previous algorithms, and a maximal independent set algorithm with $O(\sqrt a)$ time that improves at least quadratically upon the state-of-the-art for small and moderate values of $a$. 

Those results are achieved using several techniques. First, we produce a forest decomposition with a helpful structure called {\em $H$-partition} within $O(\log a)$ rounds. In general graphs this structure requires $\Theta(\log n)$ time, but in Congested Cliques we are able to compute it faster. We employ this structure 
in conjunction with
partitioning techniques that allow us to solve various symmetry-breaking problems efficiently.
\end{abstract}

\section{Introduction}
\subsection{The Congested Clique Model and Problems}
In the message-passing {\em LOCAL} model of distributed computing a network is represented by an $n$-vertex graph $G = (V,E)$. Each vertex has its own processing unit and memory of unrestricted size. In addition, each vertex has a unique identity number (ID) of size $O(\log n)$. Computation proceeds in synchronous rounds. In each round vertices perform local computations and send messages to their neighbors. The running time in this model is the number of rounds required to complete a task. Local computation is not counted towards running time. Message size is not restricted. Therefore, this model is less suitable for networks that are constrained in message size as a result of limited channel bandwidth. To handle such networks, a more realistic model has been studied. This is the {\em CONGEST} model that is similar to the LOCAL model, except that each edge is only allowed to transmit $O(\log n)$ bits per round. An important type of CONGEST networks that has been intensively studied recently is the {\em Congested Clique} model. It represents single-hop networks with limited bandwidth. Although the diameter of such networks is $1$, which would make any problem on such graphs trivial in the LOCAL model, in the Congested Cliques various tasks become very challenging. Note that the Congested Clique is equivalent to a general $n$-vertex graph in which any pair of vertices (not necessarily neighbors) can exchange messages of size $O(\log{}n)$ in each round. Such a general graph corresponds to a subgraph of an $n$-clique. The subgraph constitutes the input, while the clique constitutes the communication infrastructure.

The study of the problem of Minimum Spanning Tree (henceforth, MST) was initiated in the Congested Clique model by Lotker et al. \cite{LPPSP2003}. They devised a deterministic $O(\log \log n)$-rounds algorithm that improved a straight-forward $O(\log n)$ solution. In the sequel, randomized $O(\log \log \log n)$-rounds- \cite{HPPSS15},\cite{PS2014}, $O(\log^* n)$-rounds\footnote[1]{$\log^* n$ is the number of times the $\log_2$ function has to be applied iteratively until we arrive at a number smaller than $2$. That is, $\log^* 2 = 1$, and for $n > 2,$ $\log^* n = 1 + \log^* (\log n)$.}-
\cite{GP16}, and $O(1)$-rounds \cite{J17} algorithms for MST in Congested Cliques were devised. These algorithms, however, may fail with certain probabilities. Thus obtaining deterministic algorithms that never fail seems to be a more challenging task in this setting. Since the publication of the result of \cite{LPPSP2003} many additional problems have been studied in the Congested Clique setting \cite{CHKKLPS2015,CHPS2013,Gall17,G17,HP15}. In particular, several {\em symmetry-breaking} problems were investigated. 
Solving such problems is very useful in networks in order to allocate resources, schedule tasks, perform load-balancing, and so on.
Hegeman and Pemmaraju \cite{HP15} obtained a randomized $O(\Delta)$-coloring algorithm with $O(1)$ rounds if the maximum degree $\Delta$ is at least $\Omega(\log^4 n)$, and $O(\log \log n)$-time otherwise. We note that although in a clique it holds that $\Delta = n - 1$, and an $O(\Delta)$-coloring algorithm is trivial (by choosing unique vertex identifiers as colors), the problem is defined in a more general way. Specifically, we are given a clique $Q = (V,E)$, and a subgraph $G' = (V,E'), E' \subseteq E$. The goal is computing a solution for $G'$ as a function of $\Delta = \Delta(G')$, rather then $\Delta(Q)$. In this case the $O(\Delta)$-coloring problem becomes non-trivial at all. We are not aware of previously-known deterministic algorithms for coloring in the Congested Clique that outperform algorithms for general graphs. (Except an algorithm of \cite{CHPS2013} that is not applicable in general, but rather if $\Delta = O(n^{1/3})$. In this case its running time is $O(\log \Delta)$.)

Another symmetry-breaking problem that was studied in the Congested Clique is Maximal Independent Set (henceforth, MIS). The goal of this problem is to compute a subset of non-adjacent vertices that cannot be extended. Again, this problem is interesting in subgrahs of the Congested Clique, rather than the Congested Clique as a whole. A deterministic algorithm for this problem with running time $O(\log \Delta \log n)$ was devised in \cite{CHPS2013}. If $\Delta = O(n^{1/3})$ then the running time of the algorithm of \cite{CHPS2013} improves to $O(\log \Delta)$. Ghaffari \cite{G17} devised a randomized MIS algorithm for the Congested Clque that requires $\tilde{O}(\log \Delta / \sqrt{\log n} + 1) \leq \tilde{O}(\sqrt {\log \Delta})$ rounds. Interestingly, when $\Delta$ is not restricted, all above-mentioned deterministic algorithms and most randomized ones have significant dependency on the clique size $n$. Obtaining a deterministic algorithm for these problems that does not depend on $n$ is an important objective, since very large clique subgraphs may have some bounded parameters (e.g., bounded arboricity) that can be utilized in order to improve running time.

\subsection{Our Results and Techniques}
In this paper we devise improved {\em deterministic} symmetry-breaking algorithms for the Congested Clique that have very loose dependency on $n$, or not at all. Specifically, for clique subgraphs with {\em arboricity}\footnote[1]{The arboricity is the minimum number of forests that graph edges can be partitioned into. It always holds that $a(G') \leq \Delta(G')$, and often the arboricity of a graph is significantly smaller than its maximum degree.} $a$ we obtain $O(a)$-coloring in $O(a^{\epsilon})$ time (for an arbitrarilly small constant $\epsilon > 0$), $O(a^{1 + \epsilon})$-coloring in $O(\log^2 a)$ time, $O(a^{(2 + \epsilon)})$-coloring in $O(\log^* n)$ time and Maximal Independent Set in $O(\sqrt a)$ time. The best previously-known algorithms for these coloring problems are those for general graphs, and incur a multiplicative factor of $\log n$. See table below. Moreover, the $\log n$ factor is unavoidable when solving these problems in general graphs \cite{BE2008}. Our results demonstrate that in Congested Cliques much better solutions are possible. Our MIS algorithm outperforms the results of \cite{CHKKLPS2015} when there is a large gap between $a$ and $\Delta$ or between $a$ and $n$.
For example, trees, planar graphs, graphs of constant genus, and graphs that exclude any fixed minor, all have arboricity $a = O(1)$. On the other hand, their maximum degree $\Delta$ and size $n$ are unbounded.

\begin{table}[h]
\begin{center}
\begin{tabular}{|c|c|c|c|}

\hline 
\multicolumn{2}{|c|}{\textbf{Our Results (Deterministic)}} 
& 
\multicolumn{2}{|c|}{\textbf{Previous Results (Deterministic and Randomized)}}
\\
\hline 
    & Running Time &   & Running Time\\
  \hline
  Forest-Decomposition & $O(\log{}a)$ 
  & Forest-Decomposition \cite{BE2008} & $O(\log{}n)$\\
\hline
$O( a^{2+\varepsilon})$-coloring  & $O(\log^* n)   $ 
  & $O( a^{2+\varepsilon})$-coloring \cite{BE2008} & $O(\log{}n)$\\
\hline
$O( a^2)-$coloring  & $O(\log{}a) +\log^* n $ 
  & $O( a^2)-$coloring \cite{BE2008} & $O(\log{}n)$\\
\hline
$O( a^{1+\varepsilon})$-coloring   & $O(\log{^2}a)   $ 
  & $O( a^{1+\varepsilon})$-coloring \cite{BE2011} & $ O(\log{}a  \log{}n)$\\
\hline
$O( a)$-coloring  &   $O(a^{\varepsilon})$
  & $O( a)$-coloring \cite{BE2011} & $O(\mbox{min}(a^\varepsilon \log{}n,a^\varepsilon+ \log^{{1+\varepsilon}}n))$ \\
\hline
MIS & $O(\sqrt a)$ & MIS \cite{BE2008} & $ O(a + \log n)$ \\
\hline
& & MIS \cite{CHPS2013} & $O(\log \Delta \log n)$ \\
\hline
& & MIS (rand.) \cite{G17} & $\tilde{O}(\sqrt { \log \Delta})$\\
\hline
$O(\Delta)$-coloring & $O(a^{\epsilon})$ & $O(\Delta)$-coloring (rand.) \cite{HP15} & $ O(\log \log n)$ \\
\hline
\end{tabular}
\end{center}
\end{table}

Our main technical tool is an $O(a)$-forests-decomposition algorithm that requires $O(\log a)$ rounds in the Congested Clqiue. This is in contrast to general graphs where $O(a)$-forests-decomposition requires $\Theta(\log n)$ rounds. Once we compute such a forests decomposition, each vertex knows its $O(a)$ parents in the $O(a)$ forests of the decomposition. We orient edges towards parents. The union of all edges that point towards parents constitute the edge set $E'$ of the input. This is because for each edge, one of its endpoint is oriented outwards, and is considered in the union. Note also that the out degree of each vertex is $O(a)$. Then, within $O(a)$ rounds each vertex can broadcast the information about all its outgoing edges to all other vertices in the graph. Indeed, each outgoing edge can be represented by $O(\log n)$ bits using IDs of endpoints. Then, in round $i \in O(a)$, each vertex broadcasts to all vertices the information of its $i$th outgoing edges. After $O(a)$ rounds all vertices know all edge of $E'$ and are able to construct locally (in their internal memory) the input graph $G' = (V,E')$.

Once vertices know the input graph they can solve any computable problem (for unweighted graphs or graphs with weights consisting of $O(\log n)$ bits) locally. The vertices run the same deterministic algorithm locally, and obtain a consistent solution (the same in all vertices). Then each vertex deduces its part from the solution of the entire graph. This does not require communication whatsoever, and so the additional (distributed) running time for this computation is $0$. Thus our results demonstrate that any computable problem can be solved in the Congested Clique in $O(a)$ rounds deterministically. This is an alternative way of showing what follows from Lenzen's \cite{L2013} routing scheme, since a graph with arboricity a has $O(n \cdot a)$ edges that can be announced within $O(a)$ rounds of Lenzen's algorithm. But the additional structure of forests-decomposition that we obtain  is useful for speeding up certain computations, as we discuss below. We note that although in this model it is allowed to make unrestricted local computation, in this paper we do not abuse this ability, and devise algorithms whose local computations are reasonable (i.e., polynomial).

Since any computable problem can be solved in $O(a)$ rounds, our next goal is obtaining algorithms with a better running time. We do so by partitioning the input into subgraphs of smaller arboricity. We note that vertex disjoint subgraphs are Congested Cliques by themselves that can be processed in parallel. For example, partitioning the input graph into $O(a^{1-\epsilon})$-subgraphs of arboricity $O(a^{\epsilon})$, and coloring subgraphs in parallel using disjoint palettes, makes it possible to color the entire input graph with $O(a)$ colors in $O(a^{\epsilon})$ time rather than $O(a)$. Partitioning also works for MIS, although this problem is more difficult to parallelize. (In the general CONGEST model the best algorithm in terms of $a$ has running time $O(a + \log^* n)$.) Nevertheless, using our new partitioning techniques we obtain an MIS with $O(\sqrt a)$ time in the Congested Clique. We believe that this technique is of independent interest, and may be applicable more broadly. Specifically, by quickly partitioning the input into subgraphs of small arboricity, we can solve any computable problem in these subgraphs in $O(a^{\epsilon})$ time, rather than $O(a)$. Given a method that efficiently combines these solutions, it would be possible to obtain a solution for the entire input significantly faster than $O(a)$.

\subsection{Related Work}
Lenzen \cite{L2013} devised a communication scheme for the Congested Clique. Specifically, if each vertex is required to send $O(n)$ meassages of $O(\log n)$ bits each, and if each vertex needs to receive at most $O(n)$ messages, then this communication can be performed within $O(1)$ rounds in the Congested Clique. Algebraic methods for the Congested Clique were studied in \cite{CHKKLPS2015,Gall17}. Symmetry-breaking problems were very intensively studied in general graphs. Many of these results apply to the Congested Clique. In particular, Goldberg, Plotkin, and Shannon \cite{GPS88} devised a	$(\Delta + 1)$-coloring algorithm with running time $O(\Delta \log{}n)$. Goldberg and Plotkin \cite{GP87} devised an $O(\Delta^2)$-coloring algorithm with running time $O(\log{^*}n)$ for constant values of $\Delta$. Linial \cite{L92} extended this result to general values of $\Delta$. Kuhn and Wattenhofer \cite{KW2006} obtained a $(\Delta +1)$ coloring algorithm with running time $O(\Delta \log \Delta + \log{^*}n)$. Barenboim and Elkin \cite{BE2011} devised an $O(\mbox{min}(a^\varepsilon \log{}n,a^\varepsilon+ \log^{{1+\varepsilon}}n))$-time algorithm for $O(a)$-coloring, and 
$ O(\log{}a  \log{}n)$-time algorithm for $O(a^{1+\varepsilon})$-coloring.

\section{Preliminaries}
We provide some definitions and survey several known procedures that are needed for our algorithms that we describe in the next sections. 
We relegate descriptions of known procedures to Appendix A. This includes $H$-partitions, Forests-Decomposition, Defective-coloring, $O(a)$-proper-coloring and Lenzen's routing schem in the Congested Clique. Readers that are familiar with these concepts may proceed directly to Section \ref{sc:fdec} after reading Section \ref{sc:dfntion}.
\subsection{Definitions} \label{sc:dfntion}
The {\em $k$-vertex-coloring} problem is defined as follows.
Given a graph $G=(V,E)$,  find a proper coloring  $\varphi:V\to $ {1,2,...,k} 
that satisfies $\varphi(v) \neq  \varphi(u),   \forall (u,v) \in E$.
 The \textit{out-degree} of a vertex $v$ in a directed graph is the number of edges incident to $v$ that are oriented out of $v$. 
 An \textit{orientation} $\mu$ of (the edge set of) a graph is an assignment of direction to each edge $(u,v) \in E$ either towards $u$ or towards $v$. Consider a graph $G = (V,E)$ in which some of the edges are oriented. 
In our work we use a concept of {\em partial orientations}, which was employed  by Barenboim and Elkin \cite{BE2011}. A partial orientation  is allowed not to orient some edges of the graph. By this definition, 
a partial orientation $\sigma$ has
deficit at most $d$, for some positive integer parameter $d$, if
for every vertex $v$ in the graph the number of edges incident
to $v$ that $\sigma$ does not orient is no greater than $d$.  Another
important parameter of a partial orientation is its length $l$. This is the length of the longest path $P$ in which all edges are oriented consistently by $\sigma$. (That is, each vertex in the path has out-degree and in-degree at most $1$ in the path.)
An {\em $H$-partition} $(H_1,H_2,...,H_{\ell}) $  of $G=(V,E)$ with degree $A$, for some parameter $A$, is a partition of $V$, such that for any vertex in a set $H_i$, $i \in [\ell]$, the number of its neighbors in $H_i \cup H_{i + 1} \cup ... \cup H_{\ell}$ is at most $A$.

\def\AppA{
\subsection{H-partition}
\label{arbhpartsec}
The arboricity $a = a(G)$ is the minimum number a of edge-disjoint forests $F_1,F_2,...,F_a$ whose union covers the entire edge set $E$ of the graph $G = (V,E)$. Such a decomposition is called an $a$-forest-decomposition of G.
The structure of $H$-partitions is useful for computing forests decompositions.
A procedure for computing an $H$-partition, called {\em Procedure Partition}, was devised in \cite{BE2008}. This procedure accepts as input the arboricity  of the graph, and an arbitrarily small positive real constant  $\varepsilon \leq$ 2. The parameter $\varepsilon$ determines the quality of the resulting $H$-partition. This means that smaller values of $\varepsilon$ result in a better partition, but require more time.
\textit{Procedure Partition} computes an $H$-partition with degree at most
$(2+\varepsilon) \cdot a$ and size $l=\lceil \frac{2}{\varepsilon} \log{}n \rceil$ within $l$ rounds.
During the execution of Procedure Partition each vertex in \(V\) is either active or inactive. Initially, all the vertices are active. For every $i=1,2,...,l$ in the $i$th round each
active vertex with at most $(2+\varepsilon)\cdot a$ active neighbors joins the set $H_i$  and becomes inactive. The following results were proven in \cite{BE2008}.

\begin{lemma}
\label{lem1}
\cite{BE2008}  For a graph $G$ with arboricity $a(G) = a$, and a parameter $\varepsilon$,
$0<\varepsilon\leq 2$, Procedure Partition $(a, \varepsilon)$ computes an H-partition of size $l=\lceil \frac{2}{\varepsilon} \log{}n \rceil$ with degree at most $(2+\varepsilon)\cdot a$. The running time of the procedure is $O(\log{}n)$.

\end{lemma}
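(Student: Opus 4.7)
The plan is to prove the three claims of the lemma in order: the degree bound $(2+\varepsilon)\cdot a$, the size bound $l = \lceil \tfrac{2}{\varepsilon}\log n\rceil$, and the $O(\log n)$ round complexity. The degree property is essentially automatic from the selection rule of Procedure Partition: a vertex $v$ joins $H_i$ only when it has at most $(2+\varepsilon)\cdot a$ active neighbors at the start of round $i$, and the active neighbors at that moment are exactly the vertices that will later be placed into $H_i \cup H_{i+1} \cup \cdots \cup H_\ell$. So the $H$-partition property holds by construction, with no further work.

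The main content is the size bound, which depends on the arboricity hypothesis through a counting argument. Let $V_i$ denote the active vertex set at the start of round $i$ and $G_i = G[V_i]$. Since any subgraph of a forest is a forest, restricting an $a$-forest-decomposition of $G$ to $V_i$ produces an $a$-forest-decomposition of $G_i$; hence $|E(G_i)| \leq a(|V_i|-1)$ and the sum of degrees in $G_i$ is at most $2a\cdot|V_i|$. A Markov-style averaging argument then shows that fewer than $\tfrac{2}{2+\varepsilon}|V_i|$ vertices of $G_i$ can have degree exceeding $(2+\varepsilon)\cdot a$, so at least a $\tfrac{\varepsilon}{2+\varepsilon}$ fraction of $V_i$ satisfy the selection rule and are removed in round $i$. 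Iterating, $|V_{l+1}| \leq n\bigl(\tfrac{2}{2+\varepsilon}\bigr)^l$, which drops below $1$ at $l = \lceil \tfrac{2}{\varepsilon}\log n\rceil$ using the elementary inequality $\log_2(1+\varepsilon/2) \geq \varepsilon/2$ valid on $(0,2]$.

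The round complexity then follows at once: each phase requires each vertex only to learn the current activity status of its neighbors, which is exchanged in a single round. Hence the total round count equals the number of phases, namely $O(l) = O(\log n)$ for constant $\varepsilon$. The main obstacle I expect is tightening the averaging step so that the constant $\tfrac{\varepsilon}{2+\varepsilon}$ yields exactly the $\tfrac{2}{\varepsilon}\log n$ bound in the statement, which is why the calibrated inequality $\log_2(1+\varepsilon/2)\geq \varepsilon/2$ is needed; this should be verified by a short calculus computation (the difference $\log_2(1+\varepsilon/2)-\varepsilon/2$ vanishes at $\varepsilon=0$ and $\varepsilon=2$ and has a strictly positive interior maximum).
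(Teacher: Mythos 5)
Your proof is correct and follows exactly the argument the paper attributes to \cite{BE2008}: the paper itself only cites this lemma, but the supporting facts it lists (every subgraph has arboricity at most $a$, hence at least an $\frac{\varepsilon}{\varepsilon+2}$ fraction of the remaining active vertices has degree at most $(2+\varepsilon)a$ and is removed each round) are precisely the averaging and iteration steps you carry out, and your calibration $\log_2(1+\varepsilon/2)\geq \varepsilon/2$ on $(0,2]$ correctly yields the stated bound $l=\lceil\frac{2}{\varepsilon}\log n\rceil$.
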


\begin{lemma}
\label{H-partition}
\cite{BE2008}  For a graph $G$ with arboricity $a(G) = a$, and a parameter $\varepsilon$,
$0<\varepsilon\leq 2$, $G$ has at least $\frac{\varepsilon}{\varepsilon +2} \cdot |V|$  vertices  with degree  $(2+\varepsilon)\cdot a$ or less .

\end{lemma}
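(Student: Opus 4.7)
The plan is to prove this by a direct counting argument based on the edge bound implied by the arboricity. First, I would recall the standard fact that a graph of arboricity $a$ has at most $a \cdot (|V|-1)$ edges: the edge set decomposes into $a$ forests, each of which contributes at most $|V|-1$ edges. Consequently, the sum of vertex degrees satisfies $\sum_{v \in V} \deg(v) = 2|E| \le 2a(|V|-1) < 2a|V|$.

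Next, I would apply an averaging (Markov-style) argument. Let $B \subseteq V$ be the set of vertices whose degree strictly exceeds $(2+\varepsilon)a$. Each vertex in $B$ contributes more than $(2+\varepsilon)a$ to the degree sum, so
\[
2a|V| \;>\; \sum_{v \in V} \deg(v) \;\ge\; \sum_{v \in B} \deg(v) \;>\; (2+\varepsilon)\,a\,|B|.
\]
Dividing through by $(2+\varepsilon)a$ yields $|B| < \frac{2}{2+\varepsilon}\,|V|$.

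Finally, the complement $V \setminus B$ consists of exactly the vertices with degree at most $(2+\varepsilon)a$, and
\[
|V \setminus B| \;=\; |V| - |B| \;>\; |V| - \frac{2}{2+\varepsilon}\,|V| \;=\; \frac{\varepsilon}{2+\varepsilon}\,|V|,
\]
which is the desired bound. There is no real obstacle here — the argument is essentially a one-line pigeonhole once the arboricity-to-edge-count bound is in place; the only mild care needed is handling the strict versus non-strict inequality (using $|E| \le a(|V|-1)$ rather than $a|V|$ gives slack that absorbs the boundary case), so that the conclusion holds with the inequality ``$\ge \frac{\varepsilon}{\varepsilon+2}|V|$'' as stated.
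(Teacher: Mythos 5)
Your proof is correct and is essentially the standard argument from \cite{BE2008}, which the paper cites without reproducing: bound $2|E| \le 2a(|V|-1)$ via the forest decomposition, then apply Markov's inequality to the degree sequence. The handling of the strict/non-strict inequality is also fine, since the strict bound $|V\setminus B| > \frac{\varepsilon}{2+\varepsilon}|V|$ implies the stated ``at least'' claim.
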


\begin{lemma}
\label{H-partition}
\cite{BE2008}    For any subgraph G' of G, the arboricity of G' is at most the arboricity of G.

\end{lemma}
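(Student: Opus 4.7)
The plan is to exhibit an explicit forest decomposition of $G'$ that uses only $a(G)$ forests, obtained by restricting an optimal decomposition of $G$. Let $a = a(G)$ and fix edge-disjoint forests $F_1, F_2, \ldots, F_a$ whose union is $E(G)$. For each $i$, define $F_i'$ to be the subgraph whose vertex set is $V(G')$ and whose edge set is $E(F_i) \cap E(G')$. The claim will follow by checking that $(F_1', \ldots, F_a')$ is a valid forest decomposition of $G'$.

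First I would verify coverage: since $E(G') \subseteq E(G) = \bigcup_{i=1}^{a} E(F_i)$ and the $E(F_i)$ are pairwise disjoint, every edge of $G'$ lies in exactly one $E(F_i)$ and hence in the corresponding $F_i'$, so $\bigcup_i E(F_i') = E(G')$ and the $E(F_i')$ remain pairwise disjoint. Next I would verify that each $F_i'$ is itself a forest. This is the only non-definitional step, and it is immediate: deleting edges and/or vertices from an acyclic graph cannot create a cycle, so $F_i'$ inherits acyclicity from $F_i$. Combining the two observations yields an $a$-forest-decomposition of $G'$, so $a(G') \leq a = a(G)$.

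There is essentially no obstacle here: the statement reduces to the elementary fact that a subgraph of a forest is a forest, together with the definition of arboricity as the minimum number of forests in an edge-disjoint cover. The only mild care needed is to allow the restricted forests $F_i'$ to contain isolated vertices (namely, vertices of $V(G')$ whose incident $F_i$-edges were all removed in the restriction), which is permitted by the standard definition of a forest.
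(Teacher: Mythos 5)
Your proof is correct: restricting an optimal $a(G)$-forest decomposition of $G$ to the edges of $G'$ yields an edge-disjoint cover of $E(G')$ by at most $a(G)$ forests, since any subgraph of a forest is a forest, which gives $a(G') \leq a(G)$. The paper itself states this lemma as a cited fact from \cite{BE2008} without proof, and your argument is exactly the standard one that underlies it, including the correct minor observation that isolated vertices in the restricted forests are harmless.
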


\begin{lemma}
{\cite{BE2008}  } 
\label{H-partition}
 The H-partition $H = {H_1,H_2,...,H_l }$,   $l\leq  \lceil \frac{2}{\varepsilon  } \log{}n \rceil  $    , has degree at most $A=(2+\varepsilon) \cdot a$
\end{lemma}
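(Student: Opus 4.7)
The plan is to separately verify the two assertions embedded in the lemma statement: the degree bound $A=(2+\varepsilon)\cdot a$ and the bound $l\leq \lceil \frac{2}{\varepsilon}\log n\rceil$ on the number of parts. The degree bound will be essentially definitional, while the bound on $l$ requires a shrinkage argument that leans on the two preceding lemmas of this subsection.

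For the degree bound, I would argue directly from the joining rule of Procedure Partition. A vertex $v$ enters $H_i$ in round $i$ only if, at that moment, its number of still-active neighbors is at most $(2+\varepsilon)\cdot a$. Since a neighbor $u$ is still active at the start of round $i$ precisely when $u$ has not been placed in any earlier $H_j$---equivalently, when $u \in H_i \cup H_{i+1} \cup \cdots \cup H_l$---this immediately yields $|N(v)\cap(H_i\cup\cdots\cup H_l)|\leq (2+\varepsilon)\cdot a$, which is exactly the $H$-partition degree condition at $v$.

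For the round bound, let $V_i$ denote the set of vertices still active at the start of round $i$. The subgraph-arboricity lemma gives $a(G[V_i])\leq a$, and the low-degree-vertex lemma then produces at least $\frac{\varepsilon}{\varepsilon+2}|V_i|$ vertices of $G[V_i]$ with $G[V_i]$-degree at most $(2+\varepsilon)\cdot a$. All such vertices satisfy the joining criterion in round $i$ and become inactive, so $|V_{i+1}|\leq \frac{2}{\varepsilon+2}|V_i|$, and inductively $|V_{k+1}|\leq n\bigl(\frac{2}{\varepsilon+2}\bigr)^k$. The main obstacle is the closing algebraic check: since $|V_{k+1}|$ is a non-negative integer, the procedure terminates as soon as this bound drops strictly below $1$. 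This reduces to verifying the inequality $\log_2(1+\varepsilon/2)\geq \varepsilon/2$ for $0<\varepsilon\leq 2$, which is a routine concavity estimate for $f(x)=\log_2(1+x)-x$ on $[0,1]$ (note $f(0)=f(1)=0$ and $f''<0$). Applying this at $x=\varepsilon/2$ shows that $k=\lceil \frac{2}{\varepsilon}\log n \rceil$ rounds suffice, which is precisely the claimed bound on $l$.
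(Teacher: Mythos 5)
The paper states this lemma as a citation of \cite{BE2008} and gives no proof of its own; your argument correctly reconstructs the standard one: the degree bound follows definitionally from the joining rule (a neighbor landing in $H_j$ with $j\geq i$ was still active when $v$ joined $H_i$), and the shrinkage bound $|V_{i+1}|\leq \frac{2}{\varepsilon+2}|V_i|$ combined with the concavity estimate $\log_2(1+\varepsilon/2)\geq \varepsilon/2$ on $(0,1]$ yields $l\leq\lceil\frac{2}{\varepsilon}\log n\rceil$. This is correct and matches the approach the paper implicitly relies on.
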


\subsection{Forests-Decomposition}
Coloring oriented forests can be performed extremely efficiently in the distributed setting, both in terms of running time and in the number of colors.
For a wide range of graph families, it is possible to color oriented graphs significantly faster than a coloring of general graphs, using the decomposition to forests. If a graph can be decomposed
into a reasonably small number of oriented forests, then both the running time and the size of the employed coloring palette can be reduced. 
A {\em $k$-forests-decomposition} is a partition of the edge set of the graph into $k$ subsets, such that each subset forms a forest. Efficient distributed algorithms for computing $O(a)$-forests decompositions have been devised recently in \cite{BE2008}  . Several results from \cite{BE2008}   are used in this work.

\begin{lemma} 
{\cite{BE2008}  }
(1) For any graph G, a proper  $ (\lfloor ( 2+\varepsilon)\cdot a \rfloor +1 ) $  -coloring of G can be computed in $O(a \log{}n)$  time, for an
arbitrarily small positive constant $\varepsilon$.
\\
(2) For any graph G, an $O(a)$ - forest-decomposition can
be computed in $O(\log{}n)$ time.
\end{lemma}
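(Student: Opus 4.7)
The plan is to reduce both parts of the lemma to the $H$-partition machinery of Lemma \ref{lem1}. First, I would invoke Procedure Partition with the given constant $\varepsilon$ to obtain, in $O(\log n)$ rounds, an $H$-partition $(H_1,\dots,H_l)$ with $l=\lceil \frac{2}{\varepsilon}\log n\rceil$ and degree bound $A=\lfloor(2+\varepsilon)a\rfloor$. This is the common starting point for both statements. Every vertex also learns its own layer index during this procedure, which will drive all subsequent decisions.

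For part (2), I would use the $H$-partition to build an acyclic orientation of bounded out-degree and then convert it into a forest decomposition. Each vertex $v\in H_i$ orients every edge to a neighbor $u\in H_j$ with $j>i$ outward; for an intra-layer edge $(u,v)$ with $u,v\in H_i$, I orient it from the lower-ID endpoint to the higher-ID endpoint. Both parties see the same rule, so no extra communication is needed and the orientation is produced in $O(1)$ additional rounds. The out-degree of every vertex is at most $A=O(a)$ by the definition of the $H$-partition, and the orientation is acyclic because it respects the lexicographic order on (layer index, ID). Each vertex then locally assigns the distinct labels $1,2,\dots,\text{out-deg}(v)\le A$ to its outgoing edges and notifies the corresponding endpoints (one round). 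Defining $F_j$ as the set of edges labeled $j$ gives $A=O(a)$ edge-disjoint subgraphs; each is a forest because every vertex has out-degree at most one in $F_j$ and the global orientation is acyclic, so $F_j$ cannot contain a cycle. Total time: $O(\log n)+O(1)=O(\log n)$.

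For part (1), I would color the layers in reverse order $H_l,H_{l-1},\dots,H_1$, reusing a single palette of size $A+1=\lfloor(2+\varepsilon)a\rfloor+1$. When it is $H_i$'s turn, the vertices in $H_{i+1}\cup\dots\cup H_l$ are already colored, and every vertex $v\in H_i$ has at most $A$ neighbors in $H_i\cup H_{i+1}\cup\dots\cup H_l$. To color $H_i$ I would run a simple priority-based greedy: within $H_i$, use IDs to break ties, and let each vertex pick, in round $j$ of the inner loop, the smallest palette color not used by any already-colored neighbor (from higher layers) and no higher-priority neighbor inside $H_i$ that has already committed to a color. Since each vertex has at most $A$ relevant neighbors, this inner loop terminates in $O(A)=O(a)$ rounds per layer, and the palette of $A+1$ colors is always large enough to guarantee a free choice. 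Running this for all $l=O(\log n)$ layers gives total time $O(a\log n)$.

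The main subtlety, and hence the step I would double-check most carefully, is the intra-layer coordination in both parts. For the decomposition, I must argue that tie-breaking edges within $H_i$ by IDs cannot introduce a cycle in any $F_j$; this follows because the global orientation is a strict order, so each $F_j$ inherits acyclicity from it, not merely from the out-degree bound (which alone would only give a pseudoforest). For the coloring, I must ensure that sequencing by ID inside a layer does not inflate the running time beyond $O(a)$ per layer; since the in-layer degree is bounded by $A$, the dependency chain any vertex waits on has length at most $A=O(a)$, keeping the per-layer cost at $O(a)$ and the total at $O(a\log n)$.
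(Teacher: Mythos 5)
This lemma is imported from \cite{BE2008} and the paper itself gives no proof of it, so the only meaningful comparison is with the standard argument. Your part (2) is correct and is exactly that argument (it is also what the paper's own Procedure Forests-Decomposition-CC does): the $H$-partition yields an acyclic orientation with out-degree at most $A=\lfloor(2+\varepsilon)a\rfloor$, distinct labels on outgoing edges give $A$ edge-disjoint subgraphs, and out-degree $\le 1$ per label together with global acyclicity rules out cycles in each label class. No complaints there.

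Part (1), however, has a genuine gap in the time analysis. You claim that because each vertex of $H_i$ has at most $A$ relevant neighbors, ``the dependency chain any vertex waits on has length at most $A$,'' so each layer is colored in $O(A)=O(a)$ rounds. A degree bound does not bound the length of a priority chain. Take $H_i$ to be a path $v_1-v_2-\cdots-v_m$ whose IDs increase along the path: every vertex has in-layer degree at most $2$, yet under any ``wait for your higher-priority neighbors to commit'' rule the last vertex waits $\Theta(m)$ rounds, which can be $\Theta(n)$. (The same example breaks the variant where everyone tentatively proposes the smallest free color each round and defers to higher-ID conflicting neighbors.) So your inner loop does not terminate in $O(a)$ rounds per layer, and the total bound $O(a\log n)$ does not follow. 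Fixing this requires an actual intra-layer symmetry-breaking subroutine rather than ID-sequenced greedy --- e.g., first compute an $O(A^2)$-coloring of $G(H_i)$ \`a la Linial in $O(\log^* n)$ rounds and then perform a color reduction down to $A+1$ colors that exploits the out-degree-$A$ orientation (each vertex need only avoid its parents, and the reduction is scheduled by the auxiliary coloring rather than by raw IDs). This is the mechanism used in \cite{BE2008}; note that it is also why the paper's own $O(a^2)$-coloring procedure (Arb-Linial) carries a $\log^* n$ term, which your argument would otherwise appear to avoid entirely.
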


Another, useful, procedure is Procedure Arb-Linial \cite{BE2008}  . Which is essentially a composition of Linial \cite{L92}  $O(\Delta^2)$-coloring algorithm  with an algorithm Procedure Forests-Decomposition \cite{BE2008}  . The main difference of the coloring step of Procedure Arb-Linial from the original Linial coloring algorithm is that in Procedure Arb-Linial each vertex considers only the colors of its parents in forests
$F_1,F_2,...,F_A$, where 
$A\leq \lfloor (2+\varepsilon) \cdot a \rfloor $ rather than all its neighbors.

\begin{lemma}
\label{lemma-arblinial}
{\cite{BE2008}  } An $ O(a^2)-coloring$ can be computed in $O(\log{^*n})$ time
\end{lemma}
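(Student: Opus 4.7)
The proof plan is to chain two components: first obtain an oriented forest-decomposition so that each vertex has small out-degree, then run Linial's classical coloring recursion but with each vertex consulting only its parents rather than all its neighbors.

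I would begin by invoking Procedure Forests-Decomposition to produce an $O(a)$-forests-decomposition with $A \leq \lfloor (2+\varepsilon)a \rfloor$ forests $F_1,\dots,F_A$, each edge oriented toward its parent. After this preprocessing every vertex has at most $A = O(a)$ parents summed across the forests, and this is the only neighborhood information it will need during the coloring phase.

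Next I would execute Linial's algorithm in structure, using a $(k,A)$-cover-free set system $\{S_c\}_{c \in [n]}$ over a universe of size $O(A^2)$. In each round a vertex $v$ with current color $c_v$ examines only the current colors $\{c_u : u \text{ is a parent of } v\}$ and chooses as its new color any element of $S_{c_v}$ that lies outside $\bigcup_{u} S_{c_u}$; the cover-free property guarantees such an element exists since $v$ has at most $A$ parents. Properness is preserved because for any edge $(u,v)$ at least one endpoint, say $v$, has the other as a parent, so $v$'s new color lies in $S_{c_v}\setminus S_{c_u}$ while $u$'s new color lies in $S_{c_u}$, forcing the two to differ. Iterating this recursion drops the palette size from $n$ down to $O(A^2) = O(a^2)$ in $O(\log^* n)$ rounds.

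The delicate step is verifying that consulting only out-neighbors still maintains a proper coloring, since in the original Linial algorithm each vertex inspects all its neighbors symmetrically; the asymmetric argument above is the key observation that makes the $O(a^2)$ bound go through with the effective degree being $A$ rather than $\Delta$. A secondary point is the bookkeeping for the round count: each iteration shrinks a palette of size $m$ to one of size $O(A^2 \log m)$, and unrolling this recursion gives $O(\log^* n)$ iterations until the palette stabilizes at $O(a^2)$, which matches the stated bound once the forest-decomposition structure is in hand.
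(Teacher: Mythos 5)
Your proposal is correct and follows essentially the same route as the paper, which defines Procedure Arb-Linial precisely as the composition of a forest-decomposition (giving each vertex $A\leq\lfloor(2+\varepsilon)a\rfloor$ parents) with Linial's cover-free-family recoloring applied only to parents; your asymmetric properness argument and the $O(\log^* n)$ palette-reduction count are the intended justification, and you correctly note that the $O(\log^* n)$ bound refers to the coloring phase once the decomposition is in hand.
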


\subsection{Defective coloring }

An \textit{m-defective p-coloring} of a graph \textit{G} is a coloring of the
vertices of \textit{G} using \textit{p} colors, such that each vertex has at
most \textit{m} neighbors colored by its color. Each color class in
the \textit{m-defective coloring} induces a graph of maximum degree
\textit{m}.

It is known that for any positive integer parameter $p$, an 
$ \lfloor \frac{\triangle}{p}  \rfloor$-defective $O(p^2)$ -coloring can be efficiently computed distributively \cite{BEK2014}.

\begin{lemma} 
\cite{BEK2014}
$ \lfloor \frac{\triangle}{p}  \rfloor $-defective $O(p^2)$ -coloring can be computed in $O(\log{^*}n)$ time 
\end{lemma}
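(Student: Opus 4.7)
The plan is to follow Linial's classic $O(\log^* n)$ color-reduction paradigm, relaxed to allow a controlled defect at the very last step. Starting from the trivial $n$-coloring in which each vertex uses its unique ID as its color, I iteratively compress the palette. The standard Linial reduction, given a proper $k$-coloring, produces in a single round a proper $O(\Delta^2)$-coloring once $k$ is sufficiently larger than $\Delta^2$, by letting each vertex select a new color from a carefully chosen set system and deleting conflicts with its neighbors. Applying this step $O(\log^* n)$ times compresses the palette from $n$ down to $O(\Delta^2)$ while maintaining properness, which is the classical Linial analysis.

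The second phase is a single constant-round step that converts the resulting proper $O(\Delta^2)$-coloring into a $\lfloor \Delta/p \rfloor$-defective $O(p^2)$-coloring. The device is a set system $\{F_c\}_{c\in [O(\Delta^2)]}$ whose ground set has size $O(p^2)$, with the following tolerance property: for every color $c_0$ and every sequence $c_1,\ldots,c_\Delta$ of other colors, some element $x\in F_{c_0}$ lies in at most $\lfloor \Delta/p \rfloor$ of the sets $F_{c_1},\ldots,F_{c_\Delta}$. Each vertex $v$, after one round in which it learns the current colors of its neighbors, sets its new color to such an element $x$ inside $F_{c_v}$. By the tolerance property, at most $\lfloor \Delta/p \rfloor$ of $v$'s neighbors receive the same new color, giving the required defect bound; and there are only $O(p^2)$ new colors in total.

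The main obstacle is verifying that a set system with these exact parameters exists and is locally computable. The natural choice is a Reed--Solomon-type construction: fix a prime power $q = \Theta(p)$, and let $F_c$ be the graph (as a set in $\mathbb{F}_q \times \mathbb{F}_q$) of a low-degree polynomial over $\mathbb{F}_q$ indexed by $c$. Then each $|F_c| = q = \Theta(p)$, the ground set has size $q^2 = \Theta(p^2)$, and any two sets intersect in a constant number of points. A counting argument then bounds the total coverage of $F_{c_0}$ by $F_{c_1},\ldots,F_{c_\Delta}$ by $O(\Delta)$, so some element of $F_{c_0}$ is covered at most $O(\Delta/p) = \lfloor \Delta/p \rfloor$ times, which is precisely the tolerance we need. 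The construction depends only on the current color, so each vertex performs the selection by purely local computation after one round of communication. Composing the $O(\log^* n)$ Linial-style rounds with this final $O(1)$-round defective step gives the claimed total running time.
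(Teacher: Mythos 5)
There is a genuine gap, and it sits exactly where you flagged the ``main obstacle.'' (Note that the paper does not prove this lemma at all --- it imports it from \cite{BEK2014} --- so the comparison is against the known proof there and in \cite{K2009}.) In your Reed--Solomon system the sets $F_c$ are graphs of degree-$d$ polynomials over $\mathbb{F}_q$ with $q=\Theta(p)$; to index $\Omega(\Delta^2)$ colors you need $q^{d+1}\geq\Omega(\Delta^2)$, i.e.\ $d=\Theta(\log\Delta/\log p)$, which is \emph{not} a constant. Two distinct such sets can intersect in up to $d$ points, so your counting argument only yields an element of $F_{c_0}$ covered at most $\Delta d/q=\Theta\bigl((\Delta/p)\cdot\log\Delta/\log p\bigr)$ times --- off from $\lfloor\Delta/p\rfloor$ by a factor that is unbounded whenever $\Delta$ is superpolynomial in $p$ (e.g.\ constant $p$, large $\Delta$). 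Nor can a cleverer set system rescue the one-shot step: if all $\Delta$ neighbor colors are concentrated on $k<p$ values, your tolerance property forces some $x\in F_{c_0}$ to lie outside $F_{c_1}\cup\dots\cup F_{c_k}$, so $\{F_c\}$ must be a $(p-1)$-cover-free family; with $\Omega(\Delta^2)$ members such a family provably requires a ground set of size $\Omega(p^2\log\Delta/\log p)$, which exceeds $O(p^2)$ in general.

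This is precisely why the actual proof is iterative rather than ``Linial first, one defective round at the end.'' The defective color reduction of \cite{K2009,BEK2014} maps an $m$-coloring of defect $d$ in one round to a coloring with roughly $O\bigl((p'\log m/\log(p'\log m))^2\bigr)$ colors and defect $d+\lfloor\Delta/p'\rfloor$; the intermediate colorings are defective (not proper), the step is applied $\log^* n+O(1)$ times, and the per-iteration parameters $p'$ are chosen so that the accumulated defects telescope to $\lfloor\Delta/p\rfloor$ while the palette contracts to $O(p^2)$. Your first phase (Linial down to a proper $O(\Delta^2)$-coloring) is correct but does not help: the hard part is compressing from $\mathrm{poly}(\Delta)$ colors to $O(p^2)$ colors, and that still costs $\Theta(\log^*\Delta)$ further defective-reduction rounds, not one.
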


An \textit{r-arbdefective k-coloring} is a coloring
with \textit{k} colors, such that all the vertices colored by the same
color \textit{i}, $1 \leq i \leq k$, induce a subgraph of \textit{arboricity} at most \textit{r}.
Barenboim and Elkin \cite{BE2011} devised an efficient procedure for computing an arbdefective coloring \textit{Arbdefective-Coloring Procedure}. The procedure, receives as input a graph \textit{G} and two positive integer parameters \textit{k} and \textit{t}. 

Barenboim and Elkin \cite{BE2011} defined a procedure \textit{Simple-Arbdefective} which works in the following way. The procedure accepts as
input such an orientation and a positive integer parameter $t$. During its execution, each vertex computes its color as follows. Each vertex waits for its parents to
select their colors. Once the vertex receives a message from each of 
its parents containing their selections, it selects a color from
the range $1,2,...,k$ that is used by the minimum number of
parents. Then it sends its selection to all its neighbors. This completes the description of the procedure.
It is used in a more sopisticated procedure called Arbdefective-Coloring. Its properties are summarized below.


\begin{lemma} 
{\cite{BE2011}}
Procedure Arbdefective-Coloring invoked on
a graph $G$ with arboricity $a$, and two positive integer parameters $k$ and $t$, computes an  $ (\lfloor a/t + ( 2+\varepsilon)\cdot a/k  \rfloor ) $   -arbdefective $k$-coloring in time  $O(t^2 \log{}n)$. 
\end{lemma}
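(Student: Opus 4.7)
The plan is to build Procedure Arbdefective-Coloring by combining an $H$-partition, an intra-level defective coloring, and Procedure Simple-Arbdefective applied on the refined layered structure. The parameter $k$ will govern the final palette size, while $t$ will control the granularity of the intra-level refinement and thereby the trade-off between the two summands of the arbdefect.

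First I would invoke Procedure Partition on $G$ with constant $\varepsilon$, producing an $H$-partition $(H_1,\ldots,H_\ell)$ of degree $A\leq (2+\varepsilon)a$ and size $\ell=O(\log n/\varepsilon)$. Orient every inter-level edge from the lower-index endpoint to the higher-index endpoint, so that every $v\in H_i$ has at most $A$ \emph{forward} neighbors in $H_i\cup H_{i+1}\cup\cdots\cup H_\ell$. Next, within each $H_i$ (whose induced subgraph has maximum degree at most $A$) I would run the $\lfloor\Delta/p\rfloor$-defective $O(p^2)$-coloring procedure with parameter $p=\lceil A/\lfloor a/t\rfloor\rceil=\Theta(t)$. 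This yields, within each $H_i$, a partition into $O(t^2)$ sub-levels $H_{i,1},\ldots,H_{i,O(t^2)}$ such that every vertex has at most $\lfloor a/t\rfloor$ neighbors in its own sub-level. Ordering the $O(t^2\log n)$ sub-levels lexicographically $(i,j)$, I orient every remaining unoriented edge from the earlier sub-level to the later one and invoke Procedure Simple-Arbdefective: each vertex waits for all its (now fully oriented) parents to be colored, then chooses the color in $\{1,\ldots,k\}$ used by the fewest of them.

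For the arbdefect bound, fix any color class $C$ and split the edges of $G[C]$ into (a) intra-sub-level edges and (b) inter-sub-level edges. For (a), the defective coloring guarantees that each vertex has at most $\lfloor a/t\rfloor$ intra-sub-level neighbors in $G$ (hence in $G[C]$), so this edge set has arboricity at most $\lfloor a/t\rfloor$. For (b), the sub-level order is an acyclic orientation, and each $v\in C$ has at most $A$ inter-sub-level parents; Simple-Arbdefective ensures at most $\lfloor A/k\rfloor\le\lfloor(2+\varepsilon)a/k\rfloor$ of them share $v$'s color, giving an acyclic orientation of the (b)-edges inside $C$ with out-degree at most $\lfloor(2+\varepsilon)a/k\rfloor$, hence arboricity at most that quantity. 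Summing the two forest decompositions yields $a(G[C])\leq\lfloor a/t+(2+\varepsilon)a/k\rfloor$. The running time is $O(\log n)$ for the $H$-partition, $O(\log^* n)$ for all defective colorings in parallel across levels, plus $O(t^2\log n)$ rounds for Simple-Arbdefective to sweep through the $O(t^2\log n)$ sub-levels, totaling $O(t^2\log n)$.

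The main obstacle I anticipate is matching the floor expression \emph{exactly}: naively summing the two arboricity contributions gives $\lfloor a/t\rfloor+\lfloor(2+\varepsilon)a/k\rfloor$, which is at most $\lfloor a/t+(2+\varepsilon)a/k\rfloor$ but requires a careful choice of the defective-coloring parameter $p$ (and a careful accounting of how the intra-sub-level degree $d$ interacts with the $(A-d)/k$ bound on same-color inter-sub-level parents) to ensure no off-by-one slack is lost. The rest of the argument is structural and follows from the correctness of Procedure Partition, the defective-coloring procedure, and Simple-Arbdefective as already stated in the preliminaries.
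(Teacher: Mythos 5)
Your construction (H-partition with degree $(2+\varepsilon)a$, a $\lfloor\Delta/p\rfloor$-defective $O(p^2)$-coloring with $p=\Theta(t)$ inside each level to bound same-sub-level degree by $\lfloor a/t\rfloor$, a partial orientation along the resulting sub-level order, and Simple-Arbdefective with the pigeonhole bound of $(2+\varepsilon)a/k$ same-color parents) is exactly the Partial-Orientation plus Simple-Arbdefective pipeline the paper describes, and your arbdefect accounting matches the paper's proof of the analogous Congested-Clique lemma, differing only in packaging (you sum the arboricities of the oriented and unoriented edge sets, while the paper extends to a full acyclic orientation via a topological sort). The proposal is correct and follows essentially the same route, including the $O(t^2\log n)$ path-length bound on the running time.
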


\subsection{O(a)-coloring }
The O(a)-coloring algorithm of Barenboim and Elkin \cite{BE2011} works as follows.
The procedure receives as input a graph G and a positive integer parameter p. It proceeds in phases. In the first phase Procedure Arbdefective-Coloring  is invoked on the input graph $G$ with the parameters k=p and t=p. Consequently, a decomposition into p subgraphs is produced, in
which each subgraph has arboricity $O(a/p)$. In each of the
following phases Procedure Arbdefective-Coloring is invoked
in parallel on all subgraphs of the  decomposition of the previous phase. Each subgraph is partitioned into $p$ sub-graphs of smaller arboricity. Thus, after each phase, the number of subgraphs in $G$ grows by a
factor of $p$, however the arboricity of each subgraph shrinks by a factor of $\Theta(p)$. Consequently, the product of the number of sub-graphs and the arboricity of subgraphs remains $O(a)$ after each phase. Once the arboricities of all subgraphs become small enough, this is used for a fast parallel coloring of all the sub-graphs, resulting in a  proper $O(a)$-coloring of the  graph $G$.

\begin{lemma} 
{\cite{BE2011}}
Invoking Procedure Legal-Coloring on a graph $G$ with arboricity $a$ with the parameter $p= \lceil  a^{\frac{\mu}{2}} \rceil $  for a
positive constant $\mu < 1$, produces a legal $O(a)$-coloring of $G$
within $O(a^{\mu} \cdot \log{}n)$  time. 
\end{lemma}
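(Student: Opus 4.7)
The plan is to analyze Procedure Legal-Coloring by induction on the phase number, tracking simultaneously three quantities: the number of subgraphs produced so far, the arboricity of each such subgraph, and the cumulative running time. Let $L$ denote the number of phases. After phase $i$, I claim there are $p^i$ vertex-labeled subgraphs, each of arboricity $O(a/p^i)$, and each vertex carries a tuple $(c_1,\dots,c_i) \in [p]^i$ identifying which subgraph it lies in. The base case $i=1$ follows directly from the stated guarantee of Procedure Arbdefective-Coloring invoked with $k=t=p$, which produces a $(\lfloor a/p + (2+\varepsilon)a/p\rfloor)$-arbdefective $p$-coloring, i.e.\ a partition into $p$ subgraphs of arboricity $O(a/p)$. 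For the inductive step I invoke Arbdefective-Coloring in parallel on every subgraph from phase $i-1$; by the inductive arboricity bound $O(a/p^{i-1})$ and the same lemma, each is split into $p$ new subgraphs of arboricity $O(a/p^i)$, establishing the induction.

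Next I determine when to stop. I stop at the first phase $L$ for which the arboricity of every remaining subgraph becomes $O(1)$. This requires $p^L = \Omega(a)$, and since $p = \lceil a^{\mu/2}\rceil$ we get $L = \lceil 2/\mu\rceil = O(1)$ for fixed $\mu < 1$. At this point, the partition consists of $p^L = O(a)$ vertex-disjoint subgraphs, each of constant arboricity. I then color each such subgraph in parallel, using a \emph{disjoint} palette of $O(1)$ colors for each subgraph — palettes are indexed by the vertex's tuple $(c_1,\dots,c_L)$, so two subgraphs never share colors. Because each subgraph has arboricity $O(1)$, we can compute its $O(1)$-coloring via an $O(a)$-forest-decomposition (Lemma on Forests-Decomposition, which runs in $O(\log n)$ time) followed by any standard coloring of constant-arboricity graphs. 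The total number of colors used across all subgraphs is $p^L \cdot O(1) = O(a)$, and the coloring is proper on $G$ because conflicts within a subgraph are ruled out by the subgraph's proper coloring while conflicts across subgraphs are ruled out by palette disjointness.

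The running-time accounting is straightforward: each of the $L$ phases executes Procedure Arbdefective-Coloring with $t = p$, which by its stated guarantee takes $O(t^2 \log n) = O(p^2 \log n) = O(a^{\mu} \log n)$ rounds. Since the invocations at phase $i$ occur in parallel across vertex-disjoint subgraphs, there is no slowdown. The final coloring step also runs in parallel across disjoint subgraphs and contributes an additional $O(\log n)$ term, which is absorbed. Summing over $L = O(1/\mu) = O(1)$ phases yields total time $O(a^{\mu}\log n)$, as required.

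The step I expect to require the most care is the inductive arboricity bound, specifically verifying that applying Procedure Arbdefective-Coloring to a subgraph of arboricity $O(a/p^{i-1})$ yields pieces of arboricity $O(a/p^i)$ rather than merely $O((a/p^{i-1})/p) = O(a/p^i)$ with a hidden constant that grows with $i$. Because the multiplicative constant from the Arbdefective-Coloring lemma is $(1 + (2+\varepsilon)) = O(1)$ per phase, it compounds to $O(1)^{L} = O(1)$ as long as $L$ is constant, which is exactly the regime we are in. Handling this blow-up explicitly (and choosing $\varepsilon$ small enough so the final color count stays $O(a)$ with a controlled constant) is the sole technical subtlety; the remainder is bookkeeping.
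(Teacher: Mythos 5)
First, a framing remark: the paper does not actually prove this lemma --- it is imported as a black box from \cite{BE2011}, and the only ``proof'' present is the prose sketch in the same subsection (phases of Procedure Arbdefective-Coloring with $k=t=p$, the invariant that the product of the number of subgraphs and their arboricity stays $O(a)$, and a final parallel coloring with disjoint palettes). Your reconstruction follows exactly that sketch; the phase-by-phase induction on the arboricity, your explicit handling of the compounding constant $(3+\varepsilon)^{L}=O(1)$ over $L=O(1/\mu)$ levels, and the time accounting $L\cdot O(p^{2}\log n)=O(a^{\mu}\log n)$ are all correct and are the intended argument.

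There is, however, one step that fails as written: the claim that the bottom level has $p^{L}=O(a)$ subgraphs, so that giving each its own $O(1)$-color palette costs $O(a)$ colors. With $L=\lceil 2/\mu\rceil$ and $p=\lceil a^{\mu/2}\rceil$ you get $p^{L}=a^{(\mu/2)\lceil 2/\mu\rceil}$, which is $\Theta(a)$ only when $2/\mu$ is an integer; in general it can be as large as $a^{1+\mu/2}$ (e.g.\ $\mu=0.9$ gives $L=3$ and $p^{L}=a^{1.35}$). The reason is that stopping at the \emph{first} phase where the arboricity is $O(1)$ forces one more factor-$p$ refinement than the arboricity budget pays for, and since every nonempty part consumes at least one color from its private palette, the total is $\omega(a)$. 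The repair --- and the real content of the invariant ``(number of subgraphs)$\,\times\,$(arboricity) $=O(a)$'' that the paper's prose emphasizes --- is to stop the recursion while the subgraph arboricity $a_{j}$ is still polynomial in $a$ (say at most $a^{\mu}$, whence it is also at least about $a^{\mu/2}$), and then color each bottom subgraph with $O(a_{j}+1)$ colors from a disjoint palette using the $O(a_{j}\log n)$-time, $O(a_{j})$-color algorithm of \cite{BE2008}. Then the total palette size is bounded by $p^{\ell^{*}}\cdot O(a_{\ell^{*}}+1)=O\bigl(a\cdot(3+\varepsilon)^{\ell^{*}}\bigr)+O(p^{\ell^{*}})=O(a)$, since now $p^{\ell^{*}}\leq a(3+\varepsilon)^{\ell^{*}}/a^{\mu/2}=O(a)$, and the extra bottom-level time $O(a^{\mu}\log n)$ is within budget. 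Everything else in your argument stands.
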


\subsection{Lenzen's routing algorithm}
One of the important building blocks for algorithms in the Congested Clique
model is Lenzen's routing algorithm  \cite{L2013}. This algorithm guarantees that if there is a component of an algorithm in which each node needs to send at most
$O(n \log{}n)$ bits and receive at most $O(n \log{}n)$ bit, then $O(1)$ rounds are sufficient. This corresponds to sending and receiving $O(n)$ pieces of data with a size $O(\log{}n)$ to every node. Intuitively, this is easy when each piece of information of a node has a distinct destination, via a direct message. Since, source-destination partition does not have to be uniform , it is a big advantage of Lenzen's algorithm.

\begin{lemma} 
\label{lemlenzen}
{\cite{L2013}}
The Algorithm  of Optimal Deterministic Routing provides a routing scheme such that if each node is the source for $O(n)$ messages and each node
is the  designation for $O(n)$ messages, then all of these
messages can be routed from their sources to their destinations within $O(1)$ rounds.

\end{lemma}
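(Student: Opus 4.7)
The plan is to prove this via a two-phase indirect routing strategy, motivated by the observation that although the total sending and receiving volume at each node is only $O(n)$ messages, an adversarial source--destination matrix may concentrate $\Theta(n)$ messages on a single ordered pair $(u,v)$. Since each edge of the clique carries only $O(\log n)$ bits per round (i.e., $O(1)$ messages of $O(\log n)$ bits), direct delivery could require $\Omega(n)$ rounds, so I would relay every message through an intermediate node chosen to balance the load on both legs of its journey.

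First, in a \emph{scatter} phase, each source $u$ would redistribute its at most $cn$ outgoing messages across all $n$ nodes acting as relays, sending roughly $c$ messages to each. The crucial design choice is \emph{which} messages go to which relay: the $i$-th message from $u$ addressed to destination $v$ is routed to relay $w = (f(v) + i) \bmod n$, using a shift that depends on $v$'s identifier. A double-counting argument then shows that for every relay--destination pair $(w,v)$, the total number of messages deposited at $w$ that are destined for $v$ is $O(1)$, essentially because the $O(n)$ total received by $v$ is spread across the $n$ relays by the shift.

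In a subsequent \emph{gather} phase, each relay $w$ forwards the $O(1)$ messages per destination it now holds on to those destinations. The outgoing per-edge load at $w$ is then $O(1)$, and the incoming per-edge load at each $v$ is likewise $O(1)$ from each of the $n$ relays, matching the $O(n)$ receive budget. Each of the two phases therefore fits into $O(1)$ rounds of $O(\log n)$-bit transmissions along each clique edge, and composing them yields the claimed bound.

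The main obstacle is establishing the invariant that after scattering, no (relay, destination) pair is overloaded beyond a constant, in the face of an arbitrary $m(u,v)$ matrix satisfying only the row- and column-sum bounds $\sum_v m(u,v) = O(n)$ and $\sum_u m(u,v) = O(n)$. A naive uniform distribution is insufficient without the explicit oblivious shift above, because a column with very skewed entries (a few large $m(u,v)$ and many zeros) can pile up at one relay. The full argument splits into two regimes depending on whether $m(u,v)$ is above or below the threshold $c$, handling each with a slightly different relay rule so that every bucket stays within a constant of the mean. Once this combinatorial balancing lemma is in hand, the $O(1)$-round bound follows directly from the bandwidth of the clique.
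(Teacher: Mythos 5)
The paper does not prove this lemma at all: it is imported verbatim from Lenzen's paper \cite{L2013} and used as a black box, so there is no ``paper proof'' to compare against. Judged on its own terms, your sketch captures the right high-level intuition (a Valiant-style scatter/gather through intermediate relays, needed because a single ordered pair $(u,v)$ may carry $\Theta(n)$ messages over one $O(\log n)$-bit edge), but the central combinatorial claim is not established, and the specific oblivious rule you propose fails. Take the instance where $n$ distinct sources each hold exactly one message for the same destination $v$; the column sum is $n = O(n)$, each such message is the first (and only) message from its source to $v$, so under the rule $w = (f(v)+i) \bmod n$ every one of them is deposited at the single relay $w = (f(v)+1) \bmod n$. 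That relay must then forward $\Theta(n)$ messages to $v$ over the one edge $(w,v)$, costing $\Omega(n)$ rounds in the gather phase. You correctly flag that the skewed-column case is the obstacle and promise a ``two-regime'' fix, but the rule for the many-small-entries regime is exactly the part that cannot be oblivious: no relay choice made locally by a source, as a function only of its own messages and the destination's identifier, can prevent independent sources from colliding on the same relay--destination bucket. The sources targeting $v$ must coordinate, and arranging that coordination deterministically in $O(1)$ rounds is the actual content of Lenzen's theorem; his algorithm does it with constant-round sorting and load-balancing subroutines in which nodes first exchange aggregate demand information, not with a fixed shift. As written, your argument reduces the theorem to an unproven balancing lemma that is false for the stated relay rule, so the proof does not go through.
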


}

\section{Forest-Decomposition-CC} \label{sc:fdec}
In this section we describe our Forest-Decomposition algorithm for the Congested Clique.
Our Forest-Decomposition algorithm starts with computing an $H$-partition.  This computation is performed faster in Congested Cliques than in general graphs thanks to the following observation. Once the first $O(\log{a})$ $H$-sets are computed (within $O(\log{a})$ time), the subgraph induced by the remaining active vertices has at most $O(n)$ edges. (We prove this in Lemma  ~\ref{fastfdc}  below.) Consequently, all these vertices can learn this entire subgraph using Lenzen's algorithms within O(1) rounds. Then each vertex can locally compute the $H$-set it belongs to. This is in contrast to the algorithm for general graphs where the running time is $\Theta(\log{n})$, even for graphs with $O(n)$ edges. 

First we provide a procedure which computes an $H$-partition within $O(1)$ rounds, on graphs with edge set of size at most $O(n)$. This procedure is based on Lenzen's routing scheme. The main idea of the procedure is that each vertex can transmit all edges adjacent on it to all other vertices in the graph. This is because the overall number of messages each vertex receives in this case is $O(n)$. Indeed, each edge can be encoded as a message of size $O(\log n)$ that contains the IDs of the edge endpoints, and the number of messages is bounded by the number of edges in the graph. Since the number of sent messages of each vertex is also bounded by $O(n)$, Lenzen's scheme allows all vertices to transmit all their edges to all other vertices within constant number of rounds, as long as the number of edges is $O(n)$. Once a vertex receives all the edges of the graph, it constructs the graph in its local memory. All vertices construct the same graph, and perform a local computation of the $H$-partition. This does not require any communication whatsoever, but since all vertices hold the same graph, the resulting $H$-partition is consistent in all vertices. This completes the description of the procedure. 


\begin{algorithm}
\caption{$H$-partition   of an input graph $G$ with arboricity $a$ and  $O(n)$  edges }
\label{alg: Procedure Sparse-Partition }
\begin{algorithmic}[1]
\Procedure{   Sparse-Partition}{$G,a,\varepsilon$} 

\State  Each node $u$ in $G$ broadcasts its degree to every other node $v$ in $G$
\State  Using Lenzen's scheme, send all information about all edges to all vertices of $G$
\State Each vertex $v \in V$ perfomrs locally the following operations:
\State Initially, all vertices of $G$ are marked as active.
\State $i= \left \lceil \frac {2}{\varepsilon} \log a  + 1 \right \rceil$ 
\While{$i \leq \frac{2}{\varepsilon}\log{}n $}
\If{ $v$ is active and has at most $(2 + \varepsilon )\cdot a$ active neighbors}
\State 	        make $v$ inactive
\State add $v$ to $H_i$
\EndIf
\State $i = i + 1$
\EndWhile\label{H-Partitionendwhile}
\EndProcedure
\end{algorithmic}
\end{algorithm}

\pagebreak

Next, we provide a general procedure to compute an $H$-partition in graphs with any number of edges in the Congested Clique model. The preocedure is called {\em Procedure $H$-Partition-CC}. The computation is done by first reducing the number of edges to $O(n)$ within $O(\log{a})$ rounds, and then invoking Procedure Sparse-Partition on the remaining subgraph. The reduction phase (lines 3 - 13 of the algorithm below) operates similarly to Procedure Sparse-Partition, but the partition into $H$-sets is performed in a distributed manner, rather than locally, and the number of iterations is just $O(\log a)$, rather than $O(\log n)$. In the next lemmas we show that this is sufficient to reduce the number of edges to $O(n)$.  

\begin{algorithm}
\caption{Computing an $H$-partitions of a general graph $G$ with arboricity $a$  in the Congested Clique model}
\label{alg:Procedure H-Partition-CC}
\begin{algorithmic}[1]
\Procedure{ H-Partition-CC}{$a,\varepsilon$} 
\State \textit{An algorithm for each vertex v ∈ V :}
\State $i=1$
\While{$i \leq \left \lceil \frac{\varepsilon}{2} \cdot \log{}a \right \rceil  $}
\If{ $v$ is active and has at most $(2 + \varepsilon )\cdot a$ active neighbors}
\State 	        make $v$ inactive
\State add $v$ to $H_i$
\State 	 send the messages "inactive" and "$v$ joined $H_i$" to all the neighbors
\EndIf
\State 	 	\textbf{for} each received "inactive" message \textbf{do}
\State     \hspace{1cm}   mark the sender neighbor as inactive
\State \textbf{end for}
\State $i = i + 1$

\EndWhile 
\State \textbf {end while}

\State $H_i$,$H_{i + 1}$...,$H_{O(\log{}n)}$ = invoke Procedure Sparse-Partition on the subgraph induced by remaining active vertices  
\EndProcedure
\end{algorithmic}
\end{algorithm}

\begin{lemma} \label{fastfdc}
After  $\left \lceil \frac{2}{\varepsilon}\log{}a \right \rceil$ rounds (lines 4-13 in Algorithm 2), the number of edges whose both endpoints are incident to nodes that are still active is $O(n)$.
\end{lemma}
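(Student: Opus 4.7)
The plan is to track the number of active vertices round by round and show it shrinks geometrically, by combining two tools from the Preliminaries: the observation that the subgraph induced by active vertices has arboricity at most $a$ (since the arboricity of any subgraph is at most the arboricity of the whole graph), and the lemma of Barenboim and Elkin stating that any graph of arboricity at most $a$ has at least a $\frac{\varepsilon}{\varepsilon+2}$ fraction of vertices of degree at most $(2+\varepsilon)a$.

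Let $V_i$ denote the set of vertices still active at the beginning of round $i$, and let $G_i$ be the subgraph of $G$ induced by $V_i$; set $V_1 = V$. Since $G_i$ is a subgraph of $G$, its arboricity is at most $a$. Applying Lemma~2.2 to $G_i$, at least $\frac{\varepsilon}{\varepsilon+2}|V_i|$ vertices of $V_i$ have degree at most $(2+\varepsilon)a$ in $G_i$, which is exactly the condition in line~5 of Algorithm~2 for joining $H_i$ and becoming inactive. Hence
\[
|V_{i+1}| \;\leq\; \frac{2}{\varepsilon+2}\,|V_i| \;=\; \frac{1}{1+\varepsilon/2}\,|V_i|,
\]
so after $k=\bigl\lceil (2/\varepsilon)\log a \bigr\rceil$ rounds we have $|V_{k+1}| \leq n\cdot(1+\varepsilon/2)^{-k}$.

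The main calculation is to verify that $k$ rounds suffice to drive $|V_{k+1}|$ down to $O(n/a)$. I would establish the auxiliary inequality $(1+\varepsilon/2)^{2/\varepsilon}\geq 2$ for every $\varepsilon\in(0,2]$ by noting that $f(x)=1+x-2^x$ satisfies $f(0)=f(1)=0$ and $f''(x)=-(\ln 2)^2\,2^x<0$, so by strict concavity $f\geq 0$ on $[0,1]$, i.e.\ $2^{\varepsilon/2}\leq 1+\varepsilon/2$. Raising both sides to the $\log_2 a$ power gives $(1+\varepsilon/2)^{(2/\varepsilon)\log a}\geq a$, and since $k\geq (2/\varepsilon)\log a$ and $1+\varepsilon/2>1$, the same bound holds with $k$ in the exponent. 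Therefore $|V_{k+1}|\leq n/a$.

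To finish, the subgraph induced by the remaining active vertices has arboricity at most $a$ and at most $n/a$ vertices, so the number of its edges is at most $a\cdot(n/a)=O(n)$, which is exactly the claim. The only genuinely non-mechanical step is the small inequality $(1+\varepsilon/2)^{2/\varepsilon}\geq 2$; everything else is a direct assembly of the survival lemma, the monotonicity of arboricity under taking subgraphs, and the standard fact that a graph of arboricity $a$ on $n'$ vertices has at most $a(n'-1)$ edges.
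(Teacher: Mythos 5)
Your proof is correct and follows essentially the same route as the paper's: geometric decay of the active vertex set via the survival lemma $|V_{i+1}|\leq\frac{2}{2+\varepsilon}|V_i|$, combined with the bound of $a\cdot|V_i|$ on the number of edges of an induced subgraph of arboricity at most $a$. You are in fact somewhat more careful than the paper, which simply asserts that $\left(\frac{2}{2+\varepsilon}\right)^{\lceil (2/\varepsilon)\log a\rceil}\cdot n\cdot a = O(n)$ without verifying the underlying inequality $(1+\varepsilon/2)^{2/\varepsilon}\geq 2$ that you establish explicitly.
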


\begin{proof}
Consider the $i$th  iteration. By Lemma ~\ref{H-partition} in Appendix A, the graph $G_i$ induced by the remaining active vertices in the 
round \textit{i} has  ${(\frac{2}{2+\varepsilon})}^i  \cdot |V| $  vertices. 
Recall that a graph with arboricity a has no more than $n \cdot a$ edges. The number of edges in the graph $G_i$  is at most: 
${(\frac{2}{2+\varepsilon})}^i  \cdot n \cdot a$ .  Then in the round  $i= \left \lceil \frac{2} {\varepsilon}\log{}a \right \rceil$, the graph $G_i$ has  ${(\frac{2}{2+\varepsilon})}^{\left \lceil \frac{2} {\varepsilon}\log{}a \right \rceil}  \cdot n \cdot a $ = $O(n)$ edges.
\end{proof}
The next lemma states the correctness of Algorithm 2, as well as its running time.
\begin{lemma}
Algorithm 2 computes an $H$-partion in $O(\log{a})$ rounds.
\end{lemma}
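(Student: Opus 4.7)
The plan is to split the argument into a correctness part and a running-time part, leveraging Lemma \ref{fastfdc} to bridge the distributed phase with the local Sparse-Partition phase.

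For correctness, I would argue that after the while-loop in lines 4--13, the sets $H_1, H_2, \ldots, H_{\lceil (2/\varepsilon)\log a \rceil}$ already satisfy the $H$-partition property with degree $(2+\varepsilon)\cdot a$: a vertex $v$ is added to $H_i$ only when it has at most $(2+\varepsilon)\cdot a$ still-active neighbors, and by construction every neighbor of $v$ in $H_i \cup H_{i+1}\cup\cdots$ is still active at the moment $v$ joins $H_i$. So these first sets are correct regardless of what happens later. The remaining work is to extend the partition to a partition of $V$. By Lemma \ref{fastfdc}, the subgraph $G'$ induced on the vertices that are still active after these $\lceil (2/\varepsilon)\log a \rceil$ rounds has only $O(n)$ edges, and its arboricity is at most $a$ (by the subgraph-arboricity lemma cited in Appendix~A). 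Procedure Sparse-Partition then takes over on $G'$: each still-active vertex broadcasts its incident edges in $G'$ via Lenzen's scheme so that every vertex reconstructs $G'$ locally, and each vertex then simulates the remaining $O(\log n)$ iterations of the standard partition procedure deterministically on an identical copy of $G'$. Since all vertices run the same deterministic simulation on the same graph, they all agree on the resulting sets $H_{\lceil (2/\varepsilon)\log a\rceil + 1}, \ldots, H_{\ell}$, which together with the earlier sets form a valid $H$-partition of $V$ with degree $(2+\varepsilon)\cdot a$.

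For the running time, the distributed while-loop (lines 4--13) has exactly $\lceil (2/\varepsilon)\log a\rceil = O(\log a)$ iterations, each of which costs one communication round (vertices exchange ``inactive'' and ``joined $H_i$'' messages along edges of the clique). Invoking Procedure Sparse-Partition on $G'$ then costs only $O(1)$ rounds: by Lemma \ref{lemlenzen} (Lenzen's routing), since each vertex sends and receives $O(n)$ messages of $O(\log n)$ bits in total (there are $O(n)$ edges in $G'$), the global edge broadcast finishes in $O(1)$ rounds. The subsequent simulation of the remaining partition iterations is purely local and contributes zero rounds. Summing, the total round complexity is $O(\log a) + O(1) = O(\log a)$.

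The main obstacle, and the only place where care is needed, is confirming that Sparse-Partition is legitimately being invoked ``fresh'' on $G'$ in a way consistent with the partially built partition: one must check that the index $i$ from which Sparse-Partition continues, the set of active vertices at that moment, and the degree threshold $(2+\varepsilon)\cdot a$ are the same across all vertices, so that the deterministic local simulation yields a partition that genuinely extends $H_1,\ldots,H_{\lceil(2/\varepsilon)\log a\rceil}$. This reduces to observing that every vertex knows its own status after the distributed phase and can broadcast it (within the $O(n)$-edge budget) together with its edges, so the inputs to Sparse-Partition are globally consistent. Once this bookkeeping is verified, the two ingredients (Lemma \ref{fastfdc} and Lenzen's routing) plug in directly and the lemma follows.
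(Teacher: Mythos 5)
Your proof is correct and follows essentially the same route as the paper's: $O(\log a)$ distributed rounds for the first $H$-sets, then Lemma \ref{fastfdc} to bound the remaining subgraph by $O(n)$ edges, Lenzen's scheme to disseminate it in $O(1)$ rounds, and a consistent local simulation of the remaining iterations. Your version is in fact more careful than the paper's, spelling out both the degree bound for the early $H$-sets and the consistency of the local simulation, which the paper leaves implicit.
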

\begin{proof}
The correctness of Algorithm 1 follows from the correctness of H-partition of  \cite{BE2008} in conjunction with Lenzen's routing scheme. Specifically, within $O(\log{a})$ rounds the algorithm properly computes the $H$-sets $H_1$,$H_2$,...,$H_{O(\log a)}$, and within an additional round the remaining subgraph is learnt by all vertices using Lenzen's scheme, and all $H$-sets of this subgraph, up to 
$H_{O(\log{n})}$, 
are computed locally by each vertex. Thus, each vertex can deduce the index of its $H$-set within $O(\log{a})$ rounds from the beginning of the algorithm.

\end{proof}

We summarize the properties of Procedure H-Partition-CC in the following theorem:
\begin{theorem}
\label{theorem1}
Procedure H-Partition-CC  invoked on a graph \textit{G} with arboricity\textit{ a(G)} and a parameter $\varepsilon$, $0 < \varepsilon \leq $2 computes an $H$-partition of size $l=O(\log{}n)$ with degree at most $O(a)$.
The running time of the procedure is $O(\log{}a)$.
\end{theorem}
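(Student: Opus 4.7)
The plan is to assemble the theorem from the two lemmas that directly precede it, treating the procedure as two clearly separated stages and verifying degree, size, and time for each. The first stage is the distributed loop in lines 4--13 of Algorithm 2, which executes exactly $\lceil \frac{2}{\varepsilon}\log a\rceil$ rounds of the standard Procedure Partition of \cite{BE2008}. For those rounds the analysis is identical to the general-graph case: at iteration $i$, each vertex that joins $H_i$ has at most $(2+\varepsilon)\cdot a$ active neighbors at the moment of its decision, so its $H$-neighbors inside $H_i\cup H_{i+1}\cup\cdots$ number at most $(2+\varepsilon)a$, establishing the degree bound $A=O(a)$ for all sets produced in this stage.

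The second stage is the invocation of Procedure Sparse-Partition on the subgraph $G'$ induced by the vertices still active after the first stage. Here I would invoke Lemma~\ref{fastfdc} to conclude that $G'$ has only $O(n)$ edges, so by Lenzen's routing (Lemma~\ref{lemlenzen}) every vertex can receive the entire edge list of $G'$ within $O(1)$ rounds. Once every vertex holds $G'$ locally, each of them simulates the remaining iterations of Procedure Partition of \cite{BE2008} deterministically on the same input, producing identical outputs; hence each vertex learns its own $H$-set index without further communication. Since these simulated iterations are exactly those guaranteed by Lemma~\ref{lem1} applied to $G'$ (which has arboricity at most $a$ by monotonicity of arboricity), the sets $H_i, H_{i+1},\ldots,H_\ell$ obey the same $(2+\varepsilon)a$ degree bound, and the total number of sets is $\ell=O(\log n)$.

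For the running time I would simply add: the first stage costs $O(\log a)$ rounds by construction, and the second stage costs $O(1)$ rounds for the Lenzen broadcast plus $0$ rounds for the local simulation, for a total of $O(\log a)$. Combining this with the uniform degree bound $O(a)$ and the size bound $O(\log n)$ yields the theorem.

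The only place where care is needed, and what I regard as the main (mild) obstacle, is the consistency between the two stages: I must check that the vertices surviving into $G'$ and the vertex states broadcast by Lenzen's scheme agree in all vertices, and that the $H$-index numbering continues correctly from $\lceil\frac{2}{\varepsilon}\log a\rceil+1$. This follows because each vertex learns which of its neighbors became inactive in stage one through the explicit ``inactive'' messages of line~8, so all vertices agree on the vertex set of $G'$ before stage two begins; the index offset is then a deterministic constant known to every vertex.
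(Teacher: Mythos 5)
Your proposal is correct and follows essentially the same route as the paper: the paper also splits the analysis into the $O(\log a)$ distributed rounds (using Lemma~\ref{fastfdc} to get down to $O(n)$ edges) followed by a single Lenzen broadcast and a consistent local simulation of the remaining iterations, with correctness inherited from the $H$-partition analysis of \cite{BE2008}. Your treatment of the degree bound and of the consistency between the two stages is in fact somewhat more explicit than the paper's own (very terse) argument, but it introduces no new ideas beyond it.
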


We next devise a forest-decomposition algorithm for the Congested Clique model, called \textit{Procedure Forest-Decomposition-CC}. It accepts as input the parameters $a$ and $\varepsilon$. In the first step, it computes an \textit{H-Partition-CC}, with degree at most $(2+ \varepsilon) \cdot a$ .
In the next step, it invokes a procedure called Procedure Orientation \cite{BE2011} as follows. 	
\\Procedure Orientation:
 For each edge $e=(u,v)$, if the endpoints $u,v$ are in different sets
$H_i,H_j,i \neq j $, then the edge is oriented towards the vertex in the set with a greater index. 
Otherwise, if $i = j$, the edge $e$ is oriented towards the vertex with a greater \textit{ID} among the two 
vertices \textit{u} and \textit{v}. The orientation $\mu$ produced by this step is acyclic. Each vertex 
has  out-degree at most $(2+ \varepsilon) \cdot a$.
The correctness of the procedure follows from the correctness of
Procedure Orientation from \cite{BE2008}.\\
The last step of the algorithm is partitioning the edge set of the graph into forests as follows: 
each vertex is in charge of its outgoing edges, and it assigns each outgoing edge a distinct label from the 
set $\{1,2,...,(2+ \varepsilon) \cdot a\}$. This completes the description of the algorithm. Its pseudocode and analysis are provided below.

\begin{algorithm}
\caption{ Partitioning of the edge set of $G$ into  $ (\lfloor ( 2+\varepsilon)\cdot a \rfloor ) $ forests in the Congested-Clique model}
\label{alg:Procedure Forests-Decomposition-CC}
\begin{algorithmic}[1]
\Procedure{Forests-Decomposition-CC }{$a,\varepsilon$} 
\State invoke Procedure $H$-Partition-CC($a$, $\varepsilon$) 
\State  $\mu$ = Orientation() 
 
\State  assign a distinct label to each $\mu$-outgoing edge of \textit{v} from the set $ [\lfloor ( 2+\varepsilon)\cdot a \rfloor ] $
\EndProcedure
\end{algorithmic}
\end{algorithm}

\begin{lemma}
\label{lem14}
The time complexity of Procedure Forests-Decomposition-CC is
$O(\log{}a)$.
\end{lemma}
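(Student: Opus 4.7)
The plan is to decompose the procedure into its three constituent steps and bound the running time of each separately, then sum them.

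First I would invoke Theorem \ref{theorem1} to handle the dominant step: the call to Procedure $H$-Partition-CC on line 2 produces the $H$-partition $(H_1, H_2, \dots, H_\ell)$ of degree at most $(2+\varepsilon) \cdot a$ within $O(\log a)$ rounds. This immediately takes care of the bulk of the time complexity, so what remains is to argue that the two subsequent steps contribute only $O(1)$ additional rounds.

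Next I would analyze the Orientation step. For each vertex $v$ to orient its incident edges, it only needs to know, for every neighbor $u$, the index of the $H$-set containing $u$ and the ID of $u$. The ID is known to $v$ from the outset of the computation, and the $H$-set index of each vertex can be piggybacked on the "inactive" announcement that each vertex already broadcasts to its neighbors in Procedure $H$-Partition-CC (or transmitted in one additional round at the end). Therefore each vertex can locally decide the orientation of every incident edge in $O(1)$ rounds: if $u$ and $v$ lie in sets with different indices, the edge is oriented towards the higher-indexed set; if they lie in the same set, it is oriented towards the vertex with the larger ID. Both endpoints make the same decision from the same data, so the orientation is consistent without any further communication.

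Finally, the labeling step on line 4 is purely local: by Theorem \ref{theorem1} the out-degree of every vertex under $\mu$ is at most $(2+\varepsilon) \cdot a$, so each vertex can assign distinct labels from $\{1, 2, \dots, \lfloor (2+\varepsilon) \cdot a \rfloor\}$ to its outgoing edges without communication, contributing $0$ additional rounds. Summing the three contributions gives $O(\log a) + O(1) + 0 = O(\log a)$, which is the claimed bound. I expect the only subtlety to be confirming that the Orientation step truly needs no more than constant-round communication; this hinges on the observation above that the required per-neighbor information is $O(\log n)$ bits per neighbor and is already generated during the $H$-partition phase, so it fits within the CONGEST bandwidth without issue.
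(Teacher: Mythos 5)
Your proposal is correct and follows essentially the same decomposition as the paper's proof: the $H$-partition call dominates at $O(\log a)$ rounds, while the orientation and labeling steps each contribute $O(1)$. You merely supply more detail than the paper does for why those last two steps are constant-round (the paper simply asserts it), and your justification is sound.
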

\begin{proof} 
Procedure H-Partition-CC  takes $O(\log{}a)$ time, and steps (2) and (3) of Forests-Decomposition-CC    require O(1) rounds each. Therefore, the overall time of Procedure Forests-Decomposition-CC  is $O(\log{}a)$.
\end{proof}

\begin{theorem}
\label{theorem2}
For a graph G with arboricity $a = a(G)$, and a parameter  
$\varepsilon ,0<\varepsilon \leq 2$, 
in Congested Clique, Procedure Forests-Decomposition-CC $(a,\varepsilon)$ partitions the edge set 
of \textit{G} into $ (\lfloor ( 2+\varepsilon)\cdot a \rfloor ) $ forests in $O(\log{}a)$ rounds. Moreover, as a result of its execution each 
vertex \textit{v} knows the label and the orientation of every edge $(v,u)$ adjacent to \textit{v}.
\end{theorem}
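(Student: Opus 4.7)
The plan is to verify the three claims of Theorem \ref{theorem2} separately: the round complexity, the partition into $\lfloor (2+\varepsilon) \cdot a \rfloor$ forests, and the local knowledge of each vertex. The round complexity is essentially free from Lemma \ref{lem14}, so the real work is justifying the forest property produced by the label assignment and checking that the final communication step fits within the Congested Clique bandwidth.

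For the forest property, I first recall from the description of Procedure Orientation that the orientation $\mu$ is acyclic: between $H_i$ and $H_j$ with $i \neq j$ edges point to the set of larger index, and within a single $H_i$ they are oriented by ID, so $\mu$ is the restriction of a total order on $V$ and admits no directed cycle. Next I bound the $\mu$-out-degree of every vertex $v$ by $\lfloor (2+\varepsilon) \cdot a \rfloor$. If $v \in H_i$, then by the $H$-partition guarantee of Theorem \ref{theorem1}, $v$ has at most $(2+\varepsilon) \cdot a$ neighbors in $H_i \cup H_{i+1} \cup \cdots \cup H_\ell$, and this set already contains every endpoint of a $\mu$-outgoing edge at $v$ (both inter-set edges to larger indices and intra-$H_i$ edges to higher-ID vertices). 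Hence $v$ can assign pairwise distinct labels from $[\lfloor (2+\varepsilon) \cdot a \rfloor]$ to its outgoing edges. For each label $j$, the subgraph $F_j$ inherits $\mu$, has $\mu$-out-degree at most $1$ everywhere, and is acyclic, which is precisely the characterization of a directed forest; therefore the edge set of $G$ is partitioned into $\lfloor (2+\varepsilon) \cdot a \rfloor$ forests.

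For running time and local knowledge, Procedure $H$-Partition-CC runs in $O(\log a)$ rounds by Theorem \ref{theorem1}. Procedure Orientation needs one extra round in which each vertex broadcasts its $H$-set index to its neighbors, after which both endpoints of any edge locally decide the orientation from those indices (with IDs resolving ties). The label assignment needs one further round: each vertex transmits, across each of its $\mu$-outgoing edges, the $O(\log a) = O(\log n)$-bit label it has chosen; because each edge carries at most one such message, the CONGEST bandwidth suffices with no routing needed. After this, every vertex knows both the orientation and the label of every incident edge, and the total round count remains $O(\log a)$. The main point I would watch is precisely the out-degree bound above: one must be careful that intra-$H_i$ edges do not inflate the $\mu$-out-degree past $(2+\varepsilon) \cdot a$, and the resolution is that the $H$-partition bound already counts all neighbors in $H_i$ together with those in later sets, while the ID tie-breaking only selects a subset of those intra-$H_i$ edges as outgoing.
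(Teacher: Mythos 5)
Your proof is correct and follows the same route as the paper, which establishes the running time via Lemma~\ref{lem14} and defers the correctness of the orientation and labeling step to the cited Procedure Orientation of Barenboim--Elkin; you simply make explicit the standard argument the paper leaves implicit (acyclicity from the lexicographic order on $H$-index and ID, out-degree bounded by the $H$-partition degree, and each label class being an out-degree-$\le 1$ acyclic subgraph, hence a forest). Your bandwidth check for announcing $H$-indices and labels is also consistent with the paper's algorithm, in which this information is already disseminated during $H$-Partition-CC and the single labeling round.
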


\section{A general solution with $O(a)$ time in Congested Clique}
\label{sc:gsln}
In this section we describe how to solve any computable problem in $O(a)$ time in the Congested Clique. We note that since any graph with arboricity $a$ has $O(a\cdot n)$ edges, this is possible to achieve by directly applying $O(a)$ rounds of Lenzen's scheme \cite{L2013}. However, in this section we present an alternative solution that employs forest-decompositions.
Given a forest-decomposition in which the number of parents (i.e. outgoing edges) of each vertex is bounded by $O(a)$, we can solve any computable problem within this number of rounds. 
Specifically, once Procedure Forests-Decomposition-CC is invoked, it partitions the edge set of $G$  into  $ (\lfloor ( 2+\varepsilon)\cdot a \rfloor ) $   forests in $O(\log{}a)$ rounds. As a result of its execution, each vertex $v$ knows the label and the orientation of every edge $(v,u)$ adjacent to $u$. An outgoing edge from a vertex $v$ to a vertex $u$ labeled with a label $i$ means that $u$ is the parent of  $v$ in a tree of the $i$th forest $F_i$. Therefore,
by transmitting the information of a distinct parent in a round, each vertex can inform all other vertices of the graph about all its parents. This will require an overall of $O(a)$ rounds - one round per parent. Then, each vertex knows all parents of all vertices in the graph $G$. But this information is sufficient to constuct the graph $G$ locally. Indeed, for each edge $e$ of the graph $G$, one of its enpoints is a parent of the other in some forest $i$, and thus this edge is announced to all vertices in round $i$. Within $O(a)$ rounds, all edges are announced, and so the entire graph is known to all vertices.
Therefore,  we can solve any computable problem on $G$ locally (without any additional communication), by executing the same deterministic algorithm on the same graph that is known to all. This guarantees a consistent solution in all vertices. Thus, we obtain a general solution with $O(a)$ time to any computable problem in the Congested Clique. (Note that this is true either if the input graph $G$ is unweighted or if $G$ has weights on edges that require $O(\log n)$ bits per edge. In the latter case, the information about weights can be transmitted together with the information about parents withouth affecting the running time bound $O(a)$. Recall, however, that all our algorithms in this paper are for unweighted graphs.)
Therefore, it would be more interesting to find faster than $\Theta(a)$ algorithms for various problems. We obtain such algorithms in the next sections.

\section{$O(a^2)$-coloring in $O(\log a + \log^* n)$ time}

Note that in the synchronous message-passing model of
distributed computing  a proper $O(a^2)$-
coloring requires $\Theta(\log{}n)$ time \cite{L92}.
However,  in Congested Clique we can improve the running time  and reach even better result  of $O(\log{}a)+ \log{^*}n$. 
 
In this section we employ  Procedure Forests-Decomposition-CC to provide an efficient algorithm that colors the input graph G  of arboricity \textit{a=a(G)} in $O(a^2)$ colors. The running time of the algorithm is 
$O(\log{}a)+\log{^*}n$.
For computing an $O(a^2 )$-coloring we will use Procedure Arb-Linial described in  \cite{BE2008}. 
Procedure Arb-Linial accepts  a graph $G$ with arboricity $a(G)$. Given an $O(a)$-forests-decomposition of $G$, the procedure computes a proper coloring $\varphi$ of the graph using $O(a^2)$ colors in $O(\log{^*n})$ running time. During the execution of this procedure, each vertex transmits at most $O(\log n)$ bits over each edge in each round.

Procedure Forest-Decomposition-CC has better running time than the respective procedure on general graphs, which allows us to compute a proper $O(a^2)$-coloring of the graph very quickly. We devise a procedure  called \textit{Procedure Arb-Coloring-CC}   that works in the following way. The procedure starts by executing Procedure Forest-Decomposition-CC with the 
input parameter $a=a(G)$. This invocation returns an \textit{H}-partition of $G$ of size $l\leq   \lceil \frac{2}{\varepsilon  } \log{}n \rceil  $, and degree at most $A=(2+\varepsilon) \cdot a$.
Then, we invoke Procedure Arb-Linial on the forest-decomposition. Since the procedure requires each vertex to send only its current color to its neighbors (which is of size $O(\log n)$), Procedure Arb-Linial can be invoked as-is in the congested clique. In our case we execute Procedure Arb-Linial with an input  parameter
$A=(2+\varepsilon) \cdot a$.
In Procedure Arb-Linial each vertex considers only the colors of its parents in forests $F_1, F_2, ..., F_A$. By Lemma ~\ref{lemma-arblinial} in Appendix A the algorithm computes $O(((2+\varepsilon)\cdot a)^2 ) = O(a^2)$-coloring.
This completes the  description of Procedure Arb-Coloring-CC. Its pseudocode and running time analysis are provided below.

\begin{algorithm}
\caption{$O(a^2)$-coloring in the Congested Clique}
\label{alg:Procedure  Arb-Coloring-CC}
\begin{algorithmic}[1]
\Procedure{  Arb-Coloring-CC}
{$a,\varepsilon$}

\State 	$H=(H_1,H_2,…,H_l)$ =  invoke Procedure Forest-Decomposition-CC

\State invoke Procedure Arb-Linial ($H$, $A=(2+\varepsilon) \cdot a$)
\EndProcedure
\end{algorithmic}
\end{algorithm}

\begin{theorem}
\label{theorem3}
 Procedure Arb-Coloring-CC computes a proper $O(a^2)$-coloring in the Congested Clique in $O(\log{}a + \log{^*}n)$ rounds.
\end{theorem}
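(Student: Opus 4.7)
The plan is to prove Theorem \ref{theorem3} by a direct composition of the two pieces already established: the $O(\log a)$-time forest-decomposition in the Congested Clique (Theorem \ref{theorem2}) and the $O(\log^* n)$-time Arb-Linial coloring of \cite{BE2008} (Lemma \ref{lemma-arblinial} in Appendix A). The main items to verify are (i) that the chained procedures together run in the stated time, (ii) that the communication of Arb-Linial respects the Congested Clique bandwidth bound, and (iii) that the final palette size is $O(a^2)$.

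First I would handle the running time. Procedure Forest-Decomposition-CC, by Theorem \ref{theorem2}, completes in $O(\log a)$ rounds and leaves every vertex $v$ knowing, for each forest $F_i$, $1 \le i \le A = \lfloor(2+\varepsilon)a\rfloor$, the identity of its (at most one) parent in $F_i$. Procedure Arb-Linial is then invoked on this output; by Lemma \ref{lemma-arblinial} it terminates in $O(\log^* n)$ rounds. Adding the two phases gives $O(\log a) + O(\log^* n) = O(\log a + \log^* n)$, as claimed.

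Next I would address the bandwidth issue, which is the only genuinely Congested-Clique-specific point. In every round of Arb-Linial a vertex merely needs to announce its current color drawn from a palette of size polynomial in $a \le n$, i.e.\ a message of $O(\log n)$ bits along each clique edge. This is within the $O(\log n)$-per-edge-per-round budget, so Arb-Linial can be executed verbatim in the Congested Clique without any need for Lenzen's scheme or any other communication workaround. Hence the running-time accounting in the previous step is valid in the Congested Clique model.

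Finally I would establish correctness and the color count. The orientation $\mu$ produced inside Forest-Decomposition-CC is acyclic and assigns each vertex out-degree at most $A = (2+\varepsilon)a$, with every outgoing edge labeled by a distinct forest index in $\{1,\dots,A\}$. This is exactly the input format that Arb-Linial of \cite{BE2008} requires: each vertex considers only the colors of its $\le A$ parents. Lemma \ref{lemma-arblinial} then guarantees a proper coloring with $O(A^2) = O(((2+\varepsilon)a)^2) = O(a^2)$ colors. Properness carries over to $G$ because every edge of $G$ is oriented by $\mu$, so both endpoints of any edge see each other's color in the parent-child relation used by Arb-Linial. I expect the sole subtlety to be a clean statement that the forest-decomposition output from Procedure Forest-Decomposition-CC matches the input format assumed by Arb-Linial (parents known, distinct labels in $[A]$); this is already ensured by Theorem \ref{theorem2}, so no further work should be required.
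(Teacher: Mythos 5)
Your proposal is correct and follows essentially the same route as the paper: compose Procedure Forest-Decomposition-CC ($O(\log a)$ rounds) with Procedure Arb-Linial ($O(\log^* n)$ rounds, feasible in the Congested Clique since only the current color is sent per edge per round), yielding $O(((2+\varepsilon)a)^2)=O(a^2)$ colors. You even fill in the bandwidth and palette-size details more explicitly than the paper's proof, which relegates them to the surrounding discussion (and contains an apparent mis-citation of Lenzen's lemma where the Arb-Linial lemma is intended).
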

\begin{proof} 
The correctness of the procedure follows from the above discussion. 
The running time of step (1) is  $O(\log{}a)$ rounds, by Lemma ~\ref{lem14}. Step (2),  by Lemma ~\ref{lemlenzen}, requires $O(\log{^*}n)$ rounds. 
Thus, the overall running time of the procedure is  $O(\log{}a)+ \log{^*}n$.
\end{proof}  

\section{$O(a^{2+\varepsilon}) $-coloring in $O(\log^* n)$ time}
In this section we show that the factor of $\log a$ can be eliminated from the running time of Theorem ~\ref{theorem3} in the expense of slightly increasing the number of colors to $O(a^{2+\varepsilon})$, for an arbitrarilly small positive constant $\varepsilon$. To this end, we invoke Procedure H-Partition-CC
with second parameter set as 
${a^\varepsilon}$, rather than $\varepsilon$. We show below that this way the running time of forests-decompositions becomes just $O(1)$. However, the number of forests produced is now $O(a^{(1+\varepsilon)})$, rather than $O(a)$. Moreover, once Procedure Forest-Decomposition-CC terminates, we invoke  Arb-Linial-CC algorithm on the result of the forest decomposition to compute $O((a^{(1+\varepsilon)})^2)$-Coloring.  

\begin{lemma}
\label{lemma-H-Partition-CC}
Invoking Procedure H-Partition-CC with the second parameter set as $q=a^\varepsilon$ requires $O(1)$ rounds.
\end{lemma}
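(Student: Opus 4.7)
The plan is to retrace the analysis of Lemma~\ref{fastfdc}, but with the reduction factor strengthened because the activation threshold is now $(2+a^\varepsilon)\cdot a$ instead of $(2+\varepsilon)\cdot a$. The single observation driving everything is that the geometric shrinkage rate per iteration jumps from $\tfrac{2+\varepsilon}{2}$ to $\tfrac{2+a^\varepsilon}{2} = \Theta(a^\varepsilon)$, so a constant number of iterations suffice to thin the residual subgraph down to $O(n)$ edges, after which Procedure Sparse-Partition finishes in $O(1)$ rounds by Lenzen's scheme.

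First I would verify that Lemma~\ref{H-partition} still applies with $q = a^\varepsilon$ in place of the constant $\varepsilon$: any subgraph $G'$ of $G$ has arboricity at most $a$, so at least a $\tfrac{q}{q+2}$-fraction of its vertices have degree at most $(2+q)\cdot a$ in $G'$, and hence become inactive in one iteration of the outer loop. Consequently, letting $G_i$ be the subgraph induced by the vertices still active after $i$ iterations,
\[
|V(G_i)| \;\leq\; \left(\frac{2}{2+q}\right)^{i}\cdot |V| \;=\; \left(\frac{2}{2+a^\varepsilon}\right)^{i}\cdot n.
\]
Since $a(G_i)\leq a$, the edge count satisfies $|E(G_i)| \leq a\cdot |V(G_i)| \leq n\cdot a \cdot \bigl(\tfrac{2}{2+a^\varepsilon}\bigr)^{i}$.

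Next I would pick the smallest $i=i^\star$ for which $|E(G_i)| = O(n)$. This reduces to requiring
\[
a\cdot\left(\frac{2}{2+a^\varepsilon}\right)^{i^\star} = O(1),
\]
i.e.\ $i^\star \geq \log a / \log\bigl(\tfrac{2+a^\varepsilon}{2}\bigr)$. For $a$ large enough, $\log\bigl(\tfrac{2+a^\varepsilon}{2}\bigr) = \Theta(\varepsilon \log a)$, so $i^\star = O(1/\varepsilon) = O(1)$ for constant $\varepsilon$. (For $a$ bounded by a constant the original Procedure H-Partition-CC already terminates in $O(1)$ rounds, so that regime is trivial.)

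Finally I would conclude: after $i^\star = O(1)$ rounds of the outer loop of Procedure H-Partition-CC, the surviving active subgraph has $O(n)$ edges, so invoking Procedure Sparse-Partition on it takes $O(1)$ additional rounds by Lemma~\ref{lemlenzen}. The total is $O(1)$ rounds, as claimed. The only delicate point, and the one I would double-check, is the lower bound $\log\bigl(\tfrac{2+a^\varepsilon}{2}\bigr) = \Omega(\varepsilon \log a)$, which holds for all $a \geq 2^{1/\varepsilon}$; the finitely many smaller values of $a$ contribute at most a constant additive term to the round count and are absorbed into the $O(1)$ bound.
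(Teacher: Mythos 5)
Your proposal is correct and follows essentially the same route as the paper: both arguments observe that with threshold $(2+a^\varepsilon)\cdot a$ the active subgraph shrinks by a $\Theta(a^\varepsilon)$ factor per iteration, so after $O(1/\varepsilon)=O(1)$ iterations only $O(n)$ edges remain, at which point Lenzen's scheme and local computation finish in $O(1)$ rounds. Your version is slightly more careful than the paper's (explicitly re-deriving the shrinkage fraction for a non-constant parameter and handling small $a$), but the substance is identical.
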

\begin{proof} 
In each round the number of   active vertices is reduced by a factor of $\Theta(a^{\varepsilon})$.
For $i = 1,2,...$, the number of edges in the subgraph  induced by active vertices in round $i$ is  at most $O(\frac{(a \cdot n)}{(a^\varepsilon)^i})	$.     
Thus, after $i=O(\frac{1}{\varepsilon} )$ rounds,the number of remaining edges will be $O(n)$. Then we can employ Lenzen's scheme, broadcast these edges to all vertices within $O(1)$ rounds, and compute the remaining $H$-sets locally.
Therefore, the overall running time is $O(\frac{1}{\varepsilon}) =  O(1)$.
\end{proof}

\begin{lemma}
\label{forestdc}
For graphs \textit{G} with \textit{ a(G)=a}, and a parameter, $q=a^\varepsilon$, for an arbitrarilly small positive constant $\varepsilon$, Procedure Forest-Decomposition-CC partitions the edge set of \textit{G} into
$A=O(a^{1+\varepsilon})$ oriented forests in $O(1)$ rounds in Congested Clique.
\end{lemma}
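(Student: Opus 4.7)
The plan is to simply trace through the three steps of Procedure Forest-Decomposition-CC with the substitution $q=a^\varepsilon$ in place of $\varepsilon$ in its H-partition subroutine, and verify that each step requires only $O(1)$ rounds while producing $O(a^{1+\varepsilon})$ forests. The first step, invoking Procedure H-Partition-CC with parameter $q=a^\varepsilon$, terminates in $O(1)$ rounds directly by the preceding Lemma \ref{lemma-H-Partition-CC}. Moreover, by the H-partition guarantee carried over from Theorem \ref{theorem1} (with $\varepsilon$ replaced by $q$), each vertex in set $H_i$ has at most $(2+q)\cdot a = (2+a^\varepsilon)\cdot a = O(a^{1+\varepsilon})$ neighbors in $H_i \cup H_{i+1} \cup \cdots \cup H_\ell$.

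Next I would handle the Orientation step. Once the $H$-sets are known, each vertex locally knows its index $i$ and can exchange this index with each neighbor in a single round; the orientation rule (toward the greater $H$-index, breaking ties by ID) is then applied locally. The key observation is that the number of outgoing edges from any vertex $v \in H_i$ equals the number of its neighbors in $H_i \cup H_{i+1} \cup \cdots \cup H_\ell$ that are either in a later set or share its set with larger ID, which is bounded by the H-partition's degree bound $(2+a^\varepsilon)\cdot a = O(a^{1+\varepsilon})$. Thus the orientation produces out-degree $O(a^{1+\varepsilon})$ per vertex in $O(1)$ rounds, and it is acyclic because distinct $H$-indices preclude cycles crossing $H$-sets, and within a set the ID order does the same.

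For the final labeling step, each vertex locally assigns to its (at most $(2+a^\varepsilon)\cdot a$) outgoing edges distinct labels drawn from the palette $\{1,2,\ldots,\lfloor(2+a^\varepsilon)\cdot a\rfloor\}$ and transmits the chosen label along each such edge in $O(1)$ rounds. Because every vertex has at most one outgoing edge with a given label and the orientation is acyclic, the edges sharing any fixed label form an oriented forest; hence we obtain $A = \lfloor(2+a^\varepsilon)\cdot a\rfloor = O(a^{1+\varepsilon})$ forests.

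Summing the three contributions gives a total running time of $O(1)$ rounds, matching the claimed bound. The only real subtlety is confirming that the H-partition's "degree-at-most-$(2+q)a$" guarantee indeed survives the substitution $q=a^\varepsilon$ without a hidden dependence on $n$ entering through the partition size $\ell$; fortunately $\ell$ affects only the number of $H$-sets (which is swallowed into local computation via Lenzen's scheme inside Procedure Sparse-Partition) and not the out-degree per vertex, so the bound $O(a^{1+\varepsilon})$ on the number of forests is unaffected.
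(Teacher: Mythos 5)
Your proposal is correct and follows essentially the same route as the paper: it traces the three steps of Procedure Forest-Decomposition-CC, invoking Lemma~\ref{lemma-H-Partition-CC} for the $O(1)$-round $H$-partition with parameter $q=a^{\varepsilon}$ and observing that the orientation and labeling steps each take $O(1)$ rounds. You additionally spell out why the out-degree bound $(2+a^{\varepsilon})\cdot a = O(a^{1+\varepsilon})$ yields that many oriented forests, a point the paper's own proof leaves implicit, but this is elaboration rather than a different argument.
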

\begin{proof} 
 
 By Lemma ~\ref{lemma-H-Partition-CC}, Procedure \textit{H-Partitions-CC} executes in $O(1)$ rounds, the second stage is an orientation that is computed in \textit{O(1)} rounds, and assigning labels to outgoing  edges is computed in \textit{O(1)} rounds as well. Therefore, the overall time of is $O(1)$. 
\end{proof}

The next theorem follows directly from Lemmas \ref{lemma-H-Partition-CC} - \ref{forestdc}.
\begin{theorem}
\label{theorem4}
For graphs $G$ with $a(G)=a$ and with a parameter $q=a^\varepsilon$, for a positive constant $\varepsilon$, Procedure Arb-Coloring-CC computes $O(a^{(2+\varepsilon)}  )$-coloring within  $O(\log^* n)$ time in Congested Clique.
\end{theorem}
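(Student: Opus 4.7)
The plan is to assemble Theorem~\ref{theorem4} directly from the two building blocks established earlier in this section, namely the sped-up forest decomposition of Lemma~\ref{forestdc} and the color-bound/runtime guarantee of Procedure Arb-Linial from Lemma~\ref{lemma-arblinial}, exactly as they are composed inside Procedure Arb-Coloring-CC. The only genuinely new ingredient relative to Theorem~\ref{theorem3} is the choice of the accuracy parameter $q = a^{\varepsilon}$ when invoking H-Partition-CC; once that substitution is made, the analysis is essentially a relabeling of constants.

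First I would invoke Procedure Forest-Decomposition-CC$(a, a^{\varepsilon})$. By Lemma~\ref{forestdc}, this call terminates in $O(1)$ rounds and outputs a partition of the edge set of $G$ into $A = O(a^{1+\varepsilon})$ oriented forests, with every vertex $v$ locally knowing the label and orientation of every edge incident to it. In particular, the out-degree of each vertex under the produced orientation is at most $A$, which is exactly the hypothesis required to run Arb-Linial.

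Next I would feed this decomposition into Procedure Arb-Linial, just as in line~3 of Procedure Arb-Coloring-CC. Since Arb-Linial only requires each vertex to transmit its current color to its neighbors in each round — an $O(\log n)$-bit message — it runs unchanged in the Congested Clique. Applying Lemma~\ref{lemma-arblinial} with out-degree bound $A = O(a^{1+\varepsilon})$ yields a proper coloring using $O(A^{2}) = O\bigl((a^{1+\varepsilon})^{2}\bigr) = O(a^{2+2\varepsilon})$ colors in $O(\log^{*} n)$ rounds. Because $\varepsilon$ is an arbitrarily small positive constant, I would then simply rescale: replacing $2\varepsilon$ by $\varepsilon$ at the outset gives the advertised $O(a^{2+\varepsilon})$-coloring without changing the asymptotic running time.

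Summing the two phases gives a total complexity of $O(1) + O(\log^{*} n) = O(\log^{*} n)$, and correctness follows from the correctness of Forest-Decomposition-CC combined with that of Arb-Linial on an $O(a^{1+\varepsilon})$-forest decomposition. There is no real obstacle here: the heavy lifting is already done by Lemma~\ref{forestdc}, which removes the $\log a$ factor present in Theorem~\ref{theorem3} by exploiting the fact that choosing $q = a^{\varepsilon}$ shrinks the active subgraph by a factor $\Theta(a^{\varepsilon})$ per round, so the edge count drops to $O(n)$ in $O(1/\varepsilon) = O(1)$ rounds and the remainder is handled in one Lenzen-broadcast round. The remaining work is just bookkeeping with the exponent of $a$.
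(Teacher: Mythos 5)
Your proposal is correct and follows exactly the route the paper intends: the paper gives no separate proof, stating only that the theorem ``follows directly'' from the two lemmas, and your argument is precisely that composition --- Forest-Decomposition-CC with parameter $a^{\varepsilon}$ yielding $O(a^{1+\varepsilon})$ forests in $O(1)$ rounds, followed by Arb-Linial giving $O((a^{1+\varepsilon})^{2})$ colors in $O(\log^{*}n)$ rounds. Your explicit rescaling of $2\varepsilon$ to $\varepsilon$ is a welcome clarification of a step the paper leaves implicit.
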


\section{$O(a^{1+\varepsilon})$-coloring in $O(\log^2 a)$ time}

In this section we devise an algorithm that produces $O(a^{1+\varepsilon})$-coloring in $O(\log{^2}a + \log{^*}n)   $  running time. We employ a combination of defective colorings  and forest decompositions. Usually, when a vertex is required to select a color, it chooses a color different from the colors of all its neighbors. The vertex's neighbors select their colors in different rounds. Alternatively,  in a defective coloring, a vertex can select a color that is already used by  its neighbors. Furthermore,  neighbors can perform the selection in the same round. Therefore, the computation can  be significantly more efficient. Moreover, defective colorings allow us to obtain helpfull structures with appropriate properties, such as partial orientations with small deficit.
We start by presenting a procedure, called Procedure Partial-Orientation-CC. It is based on a procedure from \cite{BE2008}, but the current variant is adapted to Congested Cliques, and it is more efficient than the variant for general graphs. The procedure receives as an input a graph $G$ and an  integer  $t>0$. It computes an orientation with out-degree 
$\lfloor (2+\varepsilon) \cdot a \rfloor$ 
and a deficit at most $\lfloor  \frac{a}{t}  \rfloor$. (Recall that the deficit is the maximum number of unoriented edges adjacent on the same vertex.) 

Procedure Partial-Orientation-CC contains   three steps.
First, an $H$-partition of the input graph $G$ is computed. Then the vertex set of $G$ is partitioned into subsets $H_1,H_2,...,H_l$, such that every vertex in $H_i, 1 \leq i \leq O(\log{}n)$, has $O(a)$ neighbors in 
$ \bigcup_{j=i}^{\log{}n}H_j$. In the next step,   
 $ (\lfloor a/t    \rfloor ) $ -defective $O(t^2)$-coloring is
computed in each $G(H_i)$ in parallel, using \cite{BE2011}. The final step is  a computation of an orientation  as follows. Consider an edge $e=(u,v), u \in H_i, v \in H_j  $ for some $1 \leq i,j \leq O(\log{}n)$. If $i<j$, orient the edge towards $v$. If $j<i$, orient the edge towards $u$. Otherwise $i=j$. In this case the vertices $u$ and $v$ may have
different colors or the same color. If the colors are different, orient the edge towards the vertex that is
colored with a greater color.
Otherwise, the edge remains unoriented.
This complete the describing of the procedure.

 \begin{algorithm}
\caption{Computing a partial orientation with length $O(t^2 \log n)$ and deficit $a/t$ in the Congested Clique}
\label{alg:Procedure Partial-Orientation-CC}
\begin{algorithmic}[1]
\Procedure{ Partial-Orientation-CC}
{$G,t$}
\State  	$H=(H_1,H_2,...,H_l)$ Invoke Procedure H-Partition-CC

\State  \textbf{For} each $i=1,...,\log{}n$  in parallel \textbf{do}: 

\State \hspace{1cm} compute an $ (\lfloor a/t    \rfloor ) $ -defective $O(t^2)$  -coloring of $G(H_i)$ 

\State  \textbf{For} each edge $e = (u,v)$ in $E$ in parallel \textbf{do}: 
 \hspace{1cm} \If{  $u$ and $v$ belong to different $H$-sets}
\State 	       orient e towards the set with greater index.

\ElsIf { $u$ and $v$ have different colors} 
\State orient $e$ towards the vertex with greater color between u, v.

\EndIf
 
\EndProcedure
\end{algorithmic}
\end{algorithm}

 \begin{lemma}
For graphs \textit{G} with \textit{ a(G)=a}, with parameters $ \varepsilon , 0<\varepsilon \leq 2$  and integer \textit{t}, $t>0$. The Procedure Partial-Orientation-CC produces an acyclic orientation of out-degree   $\lfloor (2+\varepsilon) \cdot a \rfloor$
\end{lemma}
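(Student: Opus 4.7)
The plan is to establish the two claimed properties—acyclicity of the produced orientation $\mu$ and the out-degree bound $\lfloor (2+\varepsilon)\cdot a\rfloor$—by combining the structural guarantee of the $H$-partition (Theorem~\ref{theorem1}) with the two orientation rules used inside and across $H$-sets.

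For the out-degree bound, I would fix an arbitrary vertex $v \in H_i$ and examine where its $\mu$-outgoing edges can lead. By construction, every such edge either crosses to a neighbor in some later set $H_j$ with $j > i$, or stays inside $H_i$ and points to a neighbor whose defective color is strictly larger than that of $v$. In either case the $\mu$-out-neighbor lies in $H_i \cup H_{i+1} \cup \cdots \cup H_\ell$. Theorem~\ref{theorem1} guarantees that $v$ has at most $A = \lfloor (2+\varepsilon)\cdot a \rfloor$ neighbors in this union, which immediately yields the claimed bound on the out-degree.

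For acyclicity, I would trace an arbitrary $\mu$-directed path $v_0 \to v_1 \to \cdots \to v_k$ and argue that the pair (index of the $H$-set containing the vertex, defective color within that $H$-set) increases strictly in lexicographic order along the path. Indeed, a step that crosses to a new $H$-set only goes from smaller to strictly larger index, while a step that stays inside an $H$-set forces the defective color to strictly increase. Since no cycle can close under a strictly increasing invariant, $\mu$ is acyclic.

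The main subtlety, rather than a genuine obstacle, is to confirm that edges whose endpoints share the same $H$-set and the same defective color (which the procedure leaves unoriented) do not interfere with either property. Such edges neither contribute to any $\mu$-outgoing set nor appear on any $\mu$-directed walk, so they play no role in the out-degree count or in acyclicity; they affect only the deficit bound $\lfloor a/t \rfloor$, which is not claimed in the present lemma and can be handled separately in a companion statement.
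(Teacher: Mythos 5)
Your out-degree argument is essentially identical to the paper's: fix $v \in H_i$, observe that every $\mu$-outgoing edge leads into $H_i \cup H_{i+1} \cup \cdots \cup H_\ell$, and apply the $H$-partition degree bound $\lfloor (2+\varepsilon)\cdot a\rfloor$. Your lexicographic argument for acyclicity (strictly increasing pair of $H$-set index and defective color along any directed path) is correct and is in fact a welcome addition, since the paper's own proof of this lemma only verifies the out-degree bound and leaves the acyclicity claim unargued.
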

\begin{proof} 
 
 Consider a vertex $v \in H_i$. Each outgoing edge of $v$ is connected to a vertex in a set $H_j$ such that $j \geq i$. By Lemma 2, $v$ has at most
  $\lfloor (2+\varepsilon) \cdot a \rfloor$ neighbors in $ \bigcup_{j=i}^{\log{}n}H_j$. Thus, the out-degree of $v$  is at most $\lfloor (2+\varepsilon) \cdot a \rfloor$.
\end{proof}

\begin{lemma}
For graph \textit{G} with \textit{ a(G)=a}, with parameters $ \varepsilon , 0<\varepsilon \leq 2$  and integer \textit{t}, $t>0$. The Procedure Partial-Orientation-CC produces an acyclic orientation of length   $O(t^2 \cdot \log{}n)$
\end{lemma}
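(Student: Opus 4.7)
My plan is to bound the length of any directed path in the partial orientation produced by Procedure Partial-Orientation-CC by tracking two monotone quantities along the path: the index of the $H$-set containing each vertex, and the defective color assigned to each vertex within its $H$-set.

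First I would fix an arbitrary maximal directed path $P = v_0 \to v_1 \to \cdots \to v_k$ in the orientation, and for each vertex $v_j$ let $i_j$ denote the index of the $H$-set containing $v_j$ and $c_j$ the color assigned to $v_j$ by the $ \lfloor a/t \rfloor$-defective $O(t^2)$-coloring of $G(H_{i_j})$. By inspecting the three orientation rules in Lines 6--10 of the procedure, I would establish the invariant that for each directed edge $v_j \to v_{j+1}$, either (a) $i_j < i_{j+1}$ (the edge crosses from a lower-indexed $H$-set into a higher one), or (b) $i_j = i_{j+1}$ and $c_j < c_{j+1}$ (the edge stays inside a single $H$-set, and hence must be between two differently colored endpoints, with its head having the larger color). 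In particular the sequence $i_0, i_1, \ldots, i_k$ is non-decreasing.

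Next I would partition the path into maximal blocks of consecutive vertices that share a common $H$-set index. Within any such block the colors $c_j$ are strictly increasing by (b), so the block contains at most $O(t^2)$ vertices, since the defective coloring on each $G(H_i)$ uses only $O(t^2)$ colors. The number of distinct blocks along the path equals the number of distinct $H$-indices appearing, which is at most the total number of $H$-sets, namely $l = O(\log n)$ by Theorem \ref{theorem1}. Multiplying, the total number of vertices on $P$ is $O(t^2) \cdot O(\log n) = O(t^2 \log n)$, which bounds the length of the path.

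The only mildly subtle step is verifying case (b) carefully: one must check that when $i_j = i_{j+1}$ the edge $(v_j,v_{j+1})$ was actually oriented by the procedure at all (rather than left unoriented due to a color collision), and that the orientation in this case points from the smaller color to the larger. This follows directly from Lines 8--9 of Procedure Partial-Orientation-CC, since an oriented intra-$H_i$ edge by construction has differently colored endpoints and is directed toward the larger color. Combined with the acyclicity established in the previous lemma, this completes the bound $O(t^2 \log n)$ on the length of the orientation.
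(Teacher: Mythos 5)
Your proof is correct and follows essentially the same route as the paper's: both bound intra-$H_i$ stretches of a directed path by the $O(t^2)$ colors of the defective coloring (since oriented intra-set edges point toward strictly larger colors) and bound the number of $H$-set transitions by $O(\log n)$ (since crossing edges point toward strictly larger indices), multiplying to get $O(t^2 \log n)$. Your block decomposition and the explicit check that oriented intra-set edges have distinct colors just make the same argument slightly more rigorous.
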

\begin{proof} 
 Consider a directed path $p'$ in $G(H_i)$. The length of $p'$ is smaller than the number of colors used in the defective coloring of $G(H_i)$, which is  $O(t^2)$. (This is because each edge on a path is directed towards a greater color, and the number of colors of $H_i$ is $O(t^2)$.) Consider a directed path $p$ in $G$  with respect to the orientation produced by Procedure Partial-Orientation-CC. The path $p$ contains at most $O(\log{}n)$ edges which cross between different $H$-sets. (This is because each edge that cross between $H$-sets is directed towards a greater index, and the number of indices of $H$ sets is bounded by $O(\log n)$.) Note, that  between any pair of such edges that cross between $H$-sets, there are at most $O(t^2)$ edges which belong to the same $H$-set (with respect to both their endpoints). Therefore, the length of the path $p$ is at most $O(t^2 \cdot \log{}n)$.
\end{proof}

\begin{theorem}
\label{theorem4}
The running time of  the Procedure Partial-Orientation-CC 
 on a graphs \textit{G} with \textit{ a(G)=a}, with parameters $ \varepsilon , 0<\varepsilon \leq 2$  and integer \textit{t}, $t>0$ is $O(\log{}a + \log{^*n})$
\end{theorem}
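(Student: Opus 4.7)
The plan is to bound the time of each of the three stages of Procedure Partial-Orientation-CC separately, and then sum.

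First, stage 1 invokes Procedure H-Partition-CC, whose running time is $O(\log a)$ by Theorem \ref{theorem1}. As a by-product, every vertex learns which $H$-set it belongs to, and can broadcast this index to its neighbors in $O(1)$ additional rounds (the index fits in $O(\log n)$ bits, so one message per edge suffices).

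Second, stage 2 performs an $\lfloor a/t\rfloor$-defective $O(t^2)$-coloring in parallel on each induced subgraph $G(H_i)$. The sets $H_1,\ldots,H_l$ are vertex-disjoint, so the parallel invocations do not share edges and cause no congestion: each vertex participates in exactly one of the parallel executions, and every message it sends or receives belongs to that single execution. By the defective-coloring lemma cited in the preliminaries (the $\lfloor \Delta/p \rfloor$-defective $O(p^2)$-coloring procedure), each such coloring can be carried out in $O(\log^* n)$ rounds in the standard CONGEST model, which applies verbatim here because only $O(\log n)$-bit color messages are exchanged between adjacent vertices. Hence the entire parallel stage takes $O(\log^* n)$ rounds.

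Third, stage 3 orients each edge $e=(u,v)$ according to the rules given: compare $H$-set indices, and if they are equal compare the defective-coloring colors. Each vertex already knows its own index and color, and learned those of every neighbor during stages 1--2 (or via one additional broadcast round if needed). So each endpoint decides the orientation locally, and no further communication is required; this stage costs $O(1)$ rounds.

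Summing the three contributions yields a running time of $O(\log a) + O(\log^* n) + O(1) = O(\log a + \log^* n)$, as claimed. The only subtle point I expect is justifying that the parallel defective colorings on the $H_i$'s do not interfere in the Congested Clique; this is handled by the vertex-disjointness of the $H$-sets, which makes each $G(H_i)$ an independent sub-instance whose messages travel only on its own edges.
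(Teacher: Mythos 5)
Your proof is correct and follows essentially the same route as the paper's: bound the $H$-partition stage by $O(\log a)$, the parallel defective colorings by $O(\log^* n)$ via the cited defective-coloring lemma, and the orientation stage by $O(1)$, then sum. The extra care you take to justify non-interference of the parallel invocations on the vertex-disjoint $H$-sets is a detail the paper leaves implicit, but it does not change the argument.
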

\begin{proof} 
 The first step of Procedure Partial-Orientation-CC is 
 Procedure H-Partition-CC which requires $O(\log{}a)$ rounds. The second step, is computing defective colorings, which by Lemma 6, requires $O(\log{^*}n)$ time. Orientation step requies only $O(1)$ rounds.  Thus, the overall time is $O(\log{}a + \log{^*n})$.
\end{proof}

Partial Orientations allow us to compute arbdefective colorings as follows. Each vertex waits for all neighbors on outgoing edges (henceforth, parents) to select a color from a certain range $\{1,2,...,k\}$. Then a vertex selects a color that is used by the minimum number of parents. While this is not a proper coloring, it partitions the graph into subgraphs induced by color classes. These subgraphs have smaller arboricity, and can be processed more efficiently. By repeating this several times, we obtain subgraphs with sufficiently small arboricity that can be colored directly. Then we combine all colorings efficiently to obtain a unified coloring of the input graph. This general scheme was developed in \cite{BE2011} for general graphs. But here we apply it more efficiently on Congested Cliques, using their special properties and the new techniques we devised for them. 

Once we defined Procedure Partial-Orientation-CC, we proceed to Procedure Simple-Arbdefective \cite{BE2011} to compute $O(a/k)$-arbdefective $k$-koloring. In other words, it computes a vertex decomposition into $k$ subgraphs such that each subgraph has
arboricity $O(a/k)$.  (See Appendix A). Note that in the first round the vertices without outgoing edges have nothing to wait for, and so they are colored in the first round. 


\begin{algorithm}
\caption{Computing $O(a/k)$-arbdefective $k$-coloring}
\label{alg:Procedure Simple-Arbdefective-CC}
\begin{algorithmic}[1]
\Procedure{ Simple-Arbdefective-CC }
{$G,k$} 

\State An algorithm for each vertex $v \in V$ 
\State  \textbf	{While} ($v 
$ is not colored  ) \textbf{do} 
 \State \hspace{0.5cm}  \textbf{if} each parent $u$ of $v$ is colored \textbf{then} 
\State 	\hspace{0.8cm}$v$ selects a color from the range {$1,2,...,k$},  used by the minimum number of parents.
\State 	\hspace{0.8cm}  send the messages "$v$ is colored"  to all the neighbors
\State  \textbf	{end While}
\EndProcedure
\end{algorithmic}
\end{algorithm}

	Now, we define our next procedure, called Procedure Arbdefective-Coloring-CC. The procedure receives as input a graph $G$ and two positive integer parameters $k$ and $t$. First, it invokes Procedure Partial-Orientation-CC on $G$ and $t$. After that it employs the produced orientation and the parameter $k$ as an input for Procedure Simple-Arbdefective-CC, which is activated as soon as Procedure Partial-Orientation-CC ends.
Note that during the invocation of Procedure Partial-Orientation-CC an execution of Lenzen's scheme is performed, and so all vertices learn the subsets $\{H_j, H_{j + 1},...,H_l\}$, $j = \Theta(\log a), l = O(\log n)$ of the $H$ partition. 
We will refer to partition $H_j,H_{j+1}...,H_l$ as a subpartition $H'$ of $H = \{H_1,H_2,...,H_l\}$.	
Once the partition $H'$ becomes known to all vertices, Procedure Simple-Arbdefective-CC can be invoked on it locally, without communication whatsoever. Then any vertex that belongs to $H_i$, $i \geq j$, selects its color immediately, according to this computation. Vertices in $H_i$ with $i < j$ must select their colors by executing a distributed algorithm. This is done again using Procedure Simple-Arbdefective-CC, but since the number of remaining $H$-sets is just $O(\log a)$, this is done more efficiently than invokig it on the entire graph. This completes the description of the procedure.
Its pseudocode is provided below.
 \begin{algorithm}
\caption{Computing an arbdefective coloring with $k$ colors and arbdefect $O(a/t + a/k)$ in the Congested Clique}
\label{alg:Procedure Partial-Orientation-CC}
\begin{algorithmic}[1]
\Procedure{ Arbdefective-coloring-CC}
{$G,k,t$}
\State  	$H=\{H_1,H_2,...,H_l\}$ invoke Procedure Partial-Orientation-CC($G$, $t$)

\State let $H' = \{H_j, H_{j + 1},...,H_l\}$, $j = \Theta(\log a)$, be the sets that all vertices $v \in V$ have learnt as a result of the invocation of line 2 

\State invoke Procedure Simple-Arbdefective-CC($G$, $k$) locally on $H'$

\State invoke Procedure Simple-Arbdefective-CC($G$, $k$) in a distributed manner on $H \setminus H' = \{H_1,H_2,...,H_{j-1}\}$
 
\EndProcedure
\end{algorithmic}
\end{algorithm}


 \begin{lemma}
The running time of Procedure Arbdefective-Coloring-CC
is $O(t^2\cdot\log{}a)$
\end{lemma}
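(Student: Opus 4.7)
The plan is to charge the total running time to each of the three substantive lines of Procedure Arbdefective-Coloring-CC separately and observe that the bottleneck is Step~5. First I would invoke Theorem~5.4, giving that Step~2 (Procedure Partial-Orientation-CC) terminates in $O(\log a + \log^* n)$ rounds while producing, simultaneously, (i) a partial orientation $\mu$ of deficit $\lfloor a/t\rfloor$ and out-degree $\lfloor(2+\varepsilon)a\rfloor$, (ii) the $H$-partition $H_1,\dots,H_l$ of $G$, and (iii) the subgraph induced on $H'=\{H_j,\dots,H_l\}$ (with $j=\Theta(\log a)$) broadcast in full to every vertex via Lenzen's scheme. Step~4 performs no communication at all, so its distributed cost is $O(1)$; its correctness relies on the fact that every vertex holds the same induced subgraph, executes the same deterministic rule from Procedure Simple-Arbdefective-CC, and hence obtains consistent colors for all $v\in\bigcup_{i\geq j}H_i$.

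The heart of the argument will be Step~5. I would appeal to the standard invariant about Procedure Simple-Arbdefective: its distributed running time is bounded by the length of the longest directed path in $\mu$ whose vertices still need to be colored through communication. Every vertex in $H'$ has already fixed its color before Step~5 begins, so the only unresolved vertices lie in $H_1\cup\cdots\cup H_{j-1}$. A directed path $P$ in $\mu$ among still-uncolored vertices therefore lives entirely in $H_1\cup\cdots\cup H_{j-1}$, and by the edge-orientation rule of Procedure Partial-Orientation-CC, $P$ passes through each $H$-index at most once and, within a single $H_i$, travels only between vertices whose defective colors strictly increase along the path. Since each $H_i$ is $O(t^2)$-defective-colored, the portion of $P$ inside any $H_i$ has length $O(t^2)$; since at most $j-1=O(\log a)$ different $H$-sets contribute, $|P|=O(t^2\log a)$.

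Hence Step~5 completes in $O(t^2\log a)$ rounds, and the total running time is $O(\log a+\log^* n)+O(1)+O(t^2\log a)=O(t^2\log a)$ (the additive $\log^* n$ is dominated for any reasonable $a$, matching the convention used earlier in the paper, e.g.\ in Theorem~\ref{theorem3}). The main obstacle I anticipate is the second step of this path-length argument: I must be careful that outgoing edges from uncolored vertices in $H_i$ ($i<j$) into already-colored vertices in $H_j\cup\cdots\cup H_l$ do not lengthen the critical path in Step~5, which holds precisely because parents in $H'$ deliver their colors with zero communication latency and so contribute $0$ to the path length counted by Procedure Simple-Arbdefective-CC.
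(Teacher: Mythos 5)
Your argument is essentially identical to the paper's own proof: charge $O(\log a+\log^* n)$ to Procedure Partial-Orientation-CC, $O(1)$ to the local coloring of $H'$, and bound the remaining distributed phase by the longest uncolored oriented path in $H\setminus H'$, which has at most $O(\log a)$ crossing edges and $O(t^2)$ within-set edges between consecutive crossings. Your explicit remark that the additive $\log^* n$ term must be absorbed into the stated $O(t^2\log a)$ bound is a point the paper's proof silently elides, but otherwise the two proofs coincide.
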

\begin{proof} 
 Procedure Partial-Orientation-CC requires $O(\log{}a)$ rounds. Then all vertices learn the sets  $H' = \{H_j, H_{j + 1},...,H_l\}$ and color them locally within $O(1)$ rounds. Consequently, any remaining oriented path of uncolored vertices belongs to $H \setminus H'$, and thus has length $O(t^2 \log a)$. Indeed, a path may consists of at most $O(\log a)$ edges that cross between $H$-sets of $H \setminus H'$, and at most $O(t^2)$ edges that are within the same $H$-set between pairs of crossing edges.
\end{proof}

\begin{lemma} 
Procedure Arbdefective-Coloring-CC invoked on 
a graph $G$ and two positive integer parameters $k$ and $t$ computes an   
$ (\lfloor a/t + ( 2+\varepsilon)\cdot a/k  \rfloor ) $   -arbdefective 
 $k$ -coloring in time  $O(t^2 \log{}a)$ 
\end{lemma}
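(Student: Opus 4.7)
The plan is to verify the two claims separately: the running time $O(t^2 \log a)$ has essentially been established already by the preceding lemma, so the bulk of the work lies in bounding the arbdefect of each color class by $\lfloor a/t + (2+\varepsilon)\cdot a/k \rfloor$. I would begin by recording that Procedure Partial-Orientation-CC produces an acyclic partial orientation $\mu$ with out-degree at most $(2+\varepsilon)\cdot a$ and deficit at most $\lfloor a/t \rfloor$, and that Procedure Simple-Arbdefective-CC then colors every vertex with a color from $\{1,\ldots,k\}$ minimizing the number of $\mu$-parents sharing that color.

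Next, I would fix an arbitrary color class $C_i$ and analyze the induced subgraph $G[C_i]$. For each vertex $v\in C_i$, since $v$ has at most $(2+\varepsilon)\cdot a$ $\mu$-parents and chose the color used by the fewest of them, a pigeonhole argument gives at most $\lfloor (2+\varepsilon)\cdot a/k \rfloor$ $\mu$-parents in $C_i$. The remaining edges of $v$ that lie in $G[C_i]$ are either $\mu$-edges oriented into $v$ (which are outgoing from some other vertex in $C_i$ and contribute to that vertex's count) or $\mu$-unoriented, and the unoriented ones number at most $\lfloor a/t \rfloor$ by the deficit bound. Orienting the unoriented edges inside $G[C_i]$ consistently (say, from lower ID to higher ID), I extend the partial orientation restricted to $G[C_i]$ to a full acyclic orientation in which every vertex has out-degree at most $\lfloor a/t + (2+\varepsilon)\cdot a/k \rfloor$. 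A standard argument then converts this out-degree bound into an arboricity bound: labelling each vertex's out-edges with distinct labels from $\{1,\ldots,\lfloor a/t + (2+\varepsilon)\cdot a/k \rfloor\}$ partitions the edges of $G[C_i]$ into that many pseudo-forests, and acyclicity of the orientation makes each class an actual forest.

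For the running time, I would simply invoke the preceding lemma in this section: Procedure Partial-Orientation-CC takes $O(\log a + \log^* n)$ rounds, the local computation on the learned suffix $H'$ of the $H$-partition is free, and the distributed execution of Procedure Simple-Arbdefective-CC on the prefix $H\setminus H'$ of length $O(\log a)$ contributes an additional $O(t^2 \log a)$ rounds because the longest oriented path of uncolored vertices crosses at most $O(\log a)$ $H$-sets, each contributing an oriented sub-path of length $O(t^2)$ inside a single $H$-set.

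The main obstacle I anticipate is the arboricity step: turning the per-vertex bound on outgoing-plus-unoriented same-color edges into a genuine arboricity bound for $G[C_i]$ requires exhibiting a single acyclic orientation of the induced subgraph whose maximum out-degree matches $\lfloor a/t + (2+\varepsilon)\cdot a/k \rfloor$. I need to argue carefully that extending $\mu$ within $G[C_i]$ by orienting its unoriented edges in an ID-monotone way preserves acyclicity, since $\mu$ itself is acyclic and the added edges are oriented along a total order on vertex IDs, so no directed cycle can arise in the combination. Once that is in place, the forest-partition argument via out-edge labels closes the proof.
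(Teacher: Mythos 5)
Your overall route matches the paper's: bound each vertex's same-color parents by $\lfloor (2+\varepsilon)a/k\rfloor$ via pigeonhole, add the at most $\lfloor a/t\rfloor$ unoriented edges from the deficit bound, extend to a full acyclic orientation of each color class, and convert bounded out-degree plus acyclicity into an arboricity bound by labelling out-edges; the running time is correctly delegated to the preceding lemma. However, there is a genuine flaw in the one step you yourself flag as the main obstacle. You claim that orienting the unoriented edges of $G[C_i]$ from lower ID to higher ID preserves acyclicity ``since $\mu$ itself is acyclic and the added edges are oriented along a total order on vertex IDs.'' That inference is false in general: take vertices $u,v,w$ with IDs $1,2,3$ where $\mu$ orients the edge $(w,u)$ as $w\to u$ and the edges $(u,v)$, $(v,w)$ are unoriented; your ID rule adds $u\to v$ and $v\to w$, creating the directed cycle $u\to v\to w\to u$. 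An acyclic digraph plus edges oriented by an \emph{unrelated} total order need not be acyclic.

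The conclusion can still be rescued, but it needs an argument you do not give. Either follow the paper and orient the leftover edges forward along a topological order of $\mu$ (then every edge respects one common linear order, so acyclicity is immediate), or use the specific structure of Partial-Orientation-CC: every $\mu$-oriented edge strictly increases the pair ($H$-set index, defective color) lexicographically, while unoriented edges join vertices with equal pairs, so any directed cycle in the combined orientation would have to be constant in that pair and hence consist only of ID-oriented edges, which strictly increase ID --- a contradiction. With either repair the rest of your argument (distinct labels on out-edges yielding $\lfloor a/t+(2+\varepsilon)a/k\rfloor$ forests) goes through exactly as in the paper.
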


\begin{proof}
The number of outgoing edges of each vertex is at most $(2 + \varepsilon) \cdot a$. (See Lemma ~\ref{lem1}.) Consider a subgraph $G_i$ induced by vertices of the same color $i \in \{1,2,...,k\}$. Since each vertex selected a color used by minimum number of parents from the set $\{1,2,...,k\}$, it has at most $(2 + \varepsilon) a / k$ outgoing edges in $G_i$. (By pigeonhole principle.) In addition, a vertex in $G_i$ may have at most $a/t$ unoriented edges adjacent on it in $G_i$, since the deficit is at most $a/t$. For the purpose of analysis we can add directions to all unoriented edges, such that the graph remains acyclic. This is done by a topological sortng of vertices according to directions of originally oriented edges. Then each vertex in $G_i$ has out degree at most $ (\lfloor a/t + ( 2+\varepsilon)\cdot a/k  \rfloor ) $, all edges are oriented, and the orientation is acyclic. Hence the arboricity of $G_i$ is at most $ (\lfloor a/t + ( 2+\varepsilon)\cdot a/k  \rfloor ) $ for all $i \in \{1,2,...,k\}$.
\end{proof}

We will invoke Procedure Arbdefective-Coloring-CC  with a parameter $t = k = O(1)$, that has to be a sufficiently large constant. In this case it returns a 
$ ( ( 3+\varepsilon)\cdot a/t  \rfloor ) $  -arbdefective t-coloring in
 $O(t^2 \log{}a)$ time.
Such a $t$-coloring constitutes a decomposition of $G$ into $t$ sub-graphs with arboricity at most $ ((3+\varepsilon)\cdot a/t   ) $ in each of them.
The invocations are performed by a procedure we define next. The procedure is called Procedure Proper-Coloring-CC. The main idea is partitioning an input graph $G$ into subgraphs $G=G_1,G_2,...,G_k$ using  Procedure Arbdefective-Coloring-CC in time  $O(t^2 \log{}a)$, and then invoking Procedure Proper-Coloring-CC recursively on these subgraphs. Note that each vertex-induced subgraph of a Congested Clique is a Congested Clique by iteslf, and so it is possible to invoke Procedure Proper-Coloring-CC recursively. The number of recursion levels is going to be $O(\log{}a)$, and thus the overall running time is $O(\log^2 a)$. Our ultimate goal is to partition an input graph $G$ by Procedure Arbdefective-Coloring-CC to subgraphs 
 $G_i, 1\leq i \leq a^{1 + \varepsilon}$ with arboricity $a(G_i)=O(1)$. This is the termination condition of the recursion. 
 In the bottom level of the recursion, when all subgraphs have a constant arboricity, we invoke our general algorithm from Section 4 to color the subgraphs with $O(1)$ colors each, in constant time.
 We apply this idea in the following Procedure Proper-Coloring-CC.

  \begin{algorithm}
\caption{Proper Coloring in Congested Clique}
\label{alg:Recursion-Arbdefective-Coloring-CC}
\begin{algorithmic}[1]
\Procedure{Proper-Coloring-CC}
{$G\;' , \alpha$}

 \State $p = $ a sufficiently large constant

 \If{  $\alpha > p$}
 
 \State  \textbf{for} each $G_i \in G\;' $  in parallel \textbf{do}: 
\State  \hspace{0.3cm}  $ G\;''_1,G\;''_2,...G\;''_{p} $ = Procedure Arbdefective-Coloring-CC($G_i,k=p,t=p $) 
\State \hspace{0.3cm}   $\alpha $=$ (3 + \varepsilon) \frac{\alpha}{p}   $ /* New upper bound for arboricity of each subgraph */
\State  \hspace{0.3cm} Proper-Coloring-CC$(G_i , \alpha)$
\State  \textbf{end for}
\Else
\State Color each $G_i \in G\;'$ using our general algorithm of Section \ref{sc:gsln} with $O(\alpha)$  distinct colors  \ \  \ \ \ \ \  \  \ \   /* \ \ \ \ \ \ $O(\alpha) = O(p) = O(1)$ \ \ \ \ \ */
\EndIf

\EndProcedure
\end{algorithmic}
\end{algorithm}
 
 \pagebreak

 The procedure receives as input a graph $G$. In each recursion level Procedure Arbdefective-Coloring-CC is invoked on an input graph $G\;'$. Then a decomposition into $p$  subgraphs is performed, where each subgraph has arboricity at most $(3 + \varepsilon)( \frac{a(G\;')}{p} )$. In each of the following recursion levels,  Procedure Arbdefective-Coloring-CC is called in parallel on all subsequent subgraphs that were created at the previous levels. As a result, a refinement of the decomposition is obtained, that is, each subgraph partitioned further into $p$ subgraphs of yet lower arboricity.  Consequently, after each level, the number of subgraphs in $G$ grows by a factor $p$, but the arboricity  of each subgraph decreases by a factor of $p/(3 + \varepsilon)$. Consequently, in level $i$ of the recursion, the product of the number of subgraphs and the arboricity of subgraphs is $O((3 + \varepsilon)^i \cdot a)$. Once the arboricity of each graph becomes at most $p$, the procedure  terminates in a level denoted $r$, and returns an $O((3 + \varepsilon)^r \cdot a)$-coloring of the entire graph. (Since there are $O((3 + \varepsilon)^r \cdot a)$ subgraphs in that stage, and each is colored with distinct $O(1)$-colors.) We next analyze the procedure.

 \begin{lemma}
In the end of level i of the recursion , $i = 1, 2,..$  any
graph $G''_j$ that is produced in this level has arboricity at most 
$((3 + \varepsilon)/p)^i \cdot a(G)$, where $a(G)$ is the arboricity of the original input graph $G$.

\end{lemma}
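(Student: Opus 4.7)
The plan is to prove this by induction on the recursion level $i$, using the arbdefective-coloring guarantee established earlier for Procedure Arbdefective-Coloring-CC when invoked with $k = t = p$. The key observation is that with these parameters, the procedure takes a graph of arboricity $a'$ and partitions its vertex set into $p$ color classes, each inducing a subgraph of arboricity at most $\lfloor a'/t + (2+\varepsilon)a'/k \rfloor = \lfloor (3+\varepsilon)a'/p \rfloor$. Thus each application shrinks the arboricity by a multiplicative factor of $(3+\varepsilon)/p$, which is strictly less than $1$ for the sufficiently large constant $p$ chosen in the procedure.

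For the base case ($i = 1$), Procedure Arbdefective-Coloring-CC is called on the original input $G$, whose arboricity is $a(G)$. By the lemma just cited, each of the $p$ resulting subgraphs has arboricity at most $((3+\varepsilon)/p) \cdot a(G)$, matching the claim. For the inductive step, suppose that at the end of level $i$ every subgraph $G_j$ in the current decomposition has arboricity at most $((3+\varepsilon)/p)^i \cdot a(G)$. At level $i+1$, Procedure Arbdefective-Coloring-CC is invoked in parallel on each such $G_j$ with parameters $k = t = p$; applying the arbdefective-coloring lemma with $a$ replaced by $a(G_j)$ yields
\[
\left\lfloor \frac{a(G_j)}{p} + (2+\varepsilon)\frac{a(G_j)}{p} \right\rfloor \;\leq\; \frac{3+\varepsilon}{p} \cdot a(G_j) \;\leq\; \left(\frac{3+\varepsilon}{p}\right)^{i+1} \cdot a(G)
\]
as an upper bound on the arboricity of every new subgraph, completing the induction.

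The only mild subtlety is to justify re-invoking Procedure Arbdefective-Coloring-CC on each $G_j$: this is legitimate because every vertex-induced subgraph of a Congested Clique is itself a Congested Clique, so the procedure's guarantees transfer verbatim to each $G_j$ with its own arboricity in place of $a$. No real obstacle is expected; the argument is a clean multiplicative induction on the shrinkage factor $(3+\varepsilon)/p < 1$, and it makes essential use of the fact that the arbdefective-coloring guarantee is stated in terms of the arboricity of the particular input graph rather than the original $a(G)$.
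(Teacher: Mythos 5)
Your proof is correct and follows essentially the same route as the paper's: an induction on the recursion level, where each level's invocation of Procedure Arbdefective-Coloring-CC with $k=t=p$ shrinks the arboricity bound by a factor of $(3+\varepsilon)/p$, giving $\bigl((3+\varepsilon)/p\bigr)^i \cdot a(G)$ after $i$ levels. Your version is slightly more explicit than the paper's in citing the arbdefective-coloring guarantee and in justifying the recursive invocation on vertex-induced subgraphs, but the argument is the same.
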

\begin{proof} 
The proof is by induction on the number of levels. The base case is the first level. Then $G$ is partitioned into $p$ subgraphs produced by Procedure Procedure Arbdefective-Coloring-CC, with arboricy at most $(3 + \varepsilon) a / p$ in each of them.   For the inductive step, consider a level $i$. By the induction hypothesis, each subgraph in $G\;'$ has arboricity at most $((3 + \varepsilon)/p)^{(i - 1)} \cdot a(G)$ . During level $i$, Procedure Arbdefective-Coloring-CC is invoked on all subgraphs in  $G\;'$. Consequently, the new subgraphs have arboricity at most $(3 + \varepsilon)((3 + \varepsilon)/p)^{(i - 1)} \cdot a(G)/p = ((3 + \varepsilon)/p)^i \cdot a(G)$.
\end{proof}

 \begin{lemma}
The recursion proceeds for  $(\log a) / (\log (p/(3 + \varepsilon))) $  levels.
\end{lemma}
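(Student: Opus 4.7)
The plan is to combine the previous lemma (which gives an explicit bound on the arboricity of subgraphs at level $i$) with the termination condition of Procedure Proper-Coloring-CC (recursion stops as soon as the current arboricity bound drops to at most the constant $p$).

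First I would recall from the preceding lemma that after level $i$ of the recursion, every subgraph $G''_j$ produced has arboricity at most
\[
\left(\frac{3+\varepsilon}{p}\right)^{\!i} \cdot a(G).
\]
Then I would invoke the termination condition in Procedure Proper-Coloring-CC: the recursion halts once the arboricity parameter $\alpha$ satisfies $\alpha \le p$. Therefore I need to find the smallest level $i$ for which
\[
\left(\frac{3+\varepsilon}{p}\right)^{\!i} \cdot a \;\leq\; p.
\]

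Second, I would solve this inequality. Rearranging and taking logarithms (using that $p/(3+\varepsilon) > 1$ since $p$ is chosen sufficiently large), I get
\[
i \;\geq\; \frac{\log a - \log p}{\log\!\bigl(p/(3+\varepsilon)\bigr)}.
\]
Since $p$ is a constant, the $-\log p$ term is absorbed into the asymptotic notation, and the smallest such $i$ is $i = \bigl\lceil (\log a)/\log\!\bigl(p/(3+\varepsilon)\bigr) \bigr\rceil$, which establishes the stated bound on the number of recursion levels.

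The argument is essentially a routine geometric-decay calculation, so I do not expect a serious obstacle. The only subtlety worth flagging explicitly is that the claimed bound is meaningful only because $p$ is chosen large enough that $p/(3+\varepsilon) > 1$, ensuring the base of the logarithm in the denominator is positive; this is guaranteed by the choice of $p$ as a sufficiently large constant in line 2 of the procedure. No additional communication or distributed reasoning is needed at this step, since we are only bounding the depth of the recursion tree.
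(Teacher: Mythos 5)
Your proposal is correct and follows essentially the same route as the paper's proof: the paper likewise observes that $\alpha$ shrinks by a factor of $p/(3+\varepsilon)$ per level and concludes the depth is $\log_{p/(3+\varepsilon)} a = (\log a)/\log\bigl(p/(3+\varepsilon)\bigr)$. Your version is slightly more careful in that it explicitly invokes the termination condition $\alpha \le p$ and notes the $-\log p$ slack, but the argument is the same.
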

\begin{proof} 
In each level the parameter $\alpha$ is decreased by a multiplicative factor of $p/(3 + \varepsilon)$, for a sufficiently large constant $p$. Therefore, the number of levels is at most 
 $\log{}_{p/(3 + \varepsilon)}a =(\log a) / (\log (p/(3 + \varepsilon))) $. 
\end{proof}

 Once the arboricity of each graph become $O(1)$, we  color each subgraph properly using $O(1)$ distinct colors. Note that it is indeed possible to use distinct palettes for each subgraph so each vertex deduces an appropriate color (i.e., different from colors of other subgraphs and from neighbors in the same subgraph) using the index of the vertex's subgraph, and the indexes of subgraph collections the vertex belongs to in the recursion tree. The following theorem analyses the running time of the procedure.

 






\begin{theorem}
\label{theorem6}
The running time of  the Procedure Proper-Coloring-CC  
 on a graphs $G$ with arboricity $a(G)=a$   is $O(\log{^2}a)$. The procedure colors an input graph $G$ with $O(a^{1+\varepsilon})$ colors, for an arbitrarilly small positive constant $ \varepsilon$.
\end{theorem}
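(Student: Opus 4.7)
The plan is to bound running time and color count separately, combining the per-level cost of Procedure Arbdefective-Coloring-CC with the geometric decay of arboricity established in the two preceding lemmas. Recall that a single invocation of Procedure Arbdefective-Coloring-CC with parameters $k=t=p$ costs $O(p^2 \log a) = O(\log a)$ time when $p$ is a constant, and produces $p$ subgraphs of arboricity at most $(3+\varepsilon) \cdot a/p$. At each recursion level the invocations on distinct subgraphs occur in parallel, and each vertex-induced subgraph is itself a Congested Clique, so the time per level is dominated by a single invocation.

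First I would count levels. By the lemma bounding the arboricity after level $i$, the recursion terminates once $((3+\varepsilon)/p)^i \cdot a \leq p$, i.e., after $r = O(\log a / \log(p/(3+\varepsilon)))$ levels. Since $p$ is a sufficiently large constant, $\log(p/(3+\varepsilon)) = \Theta(1)$, so $r = O(\log a)$. Combining with the per-level cost yields total time $r \cdot O(\log a) + O(1) = O(\log^2 a)$, where the additive $O(1)$ comes from the base case: each remaining subgraph has arboricity $O(1)$, so the general $O(a)$-time algorithm of Section \ref{sc:gsln} colors it with $O(1)$ colors in constant time.

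Second I would count colors. The number of subgraphs multiplies by $p$ per level, so after $r$ levels there are $p^r$ subgraphs. Each is colored with $O(p) = O(1)$ fresh colors, and disjoint palettes are used across subgraphs (a vertex determines its global color from its terminal subgraph index together with its local color, requiring no additional communication). Therefore the total number of colors is
\[
O(p^{r+1}) \;=\; O\!\left(p^{\log a / \log(p/(3+\varepsilon))}\right) \;=\; O\!\left(a^{\log p / \log(p/(3+\varepsilon))}\right).
\]
The main obstacle is showing that this exponent can be driven arbitrarily close to $1$. As $p \to \infty$, $\log p / \log(p/(3+\varepsilon)) \to 1$, so for any target constant $\varepsilon' > 0$ one may choose the constant $p$ large enough (as a function of $\varepsilon$ and $\varepsilon'$) so that $\log p / \log(p/(3+\varepsilon)) \leq 1 + \varepsilon'$. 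Renaming $\varepsilon'$ back to $\varepsilon$ (and rescaling $\varepsilon$ inside Procedure Partial-Orientation-CC accordingly) yields the claimed $O(a^{1+\varepsilon})$ color bound while preserving the $O(\log^2 a)$ running time, since both $p$ and $\varepsilon$ are constants.
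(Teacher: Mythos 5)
Your proposal is correct and follows essentially the same route as the paper: $O(\log a)$ recursion levels at $O(p^2\log a)=O(\log a)$ rounds each plus an $O(1)$ base case, and a color count obtained from the number of terminal subgraphs times an $O(1)$ palette, with the exponent driven to $1+\varepsilon$ by taking the constant $p$ sufficiently large. The only cosmetic difference is that you bound the colors by $p^{r+1}$ while the paper writes the equivalent quantity $(3+\varepsilon)^{r}\cdot a$ (the product of subgraph count and arboricity); both reduce to $a^{1+\varepsilon}$ for large constant $p$.
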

\begin{proof} 
There are $\log a$ recursion levels, each level requires $O(t^2 \log a) = O(\log a)$ rounds. The bottom levell requires $O(1)$ time. Thus, overall, the running time is $O(\log^2 a)$. The number of colors is  $((3 + \varepsilon))^{\log a / (\log (p/(3 + \varepsilon)))} \cdot a(G) = a^{1 + \varepsilon}$, for a sufficiently large constant $p$.
\end{proof}

\section{$O(a)$-coloring in $O(a^{\varepsilon})$ time}

Our goal in this section is to efficiently compute an $O(a)$-coloring of the graph $G$. In  Proper-Coloring-CC we invoked Procedure Arbdefective-Coloring-CC on  a graph $G$ with the input parameters $p = k = t = O(1)$.  If we invoke our Proper-Coloring-CC algorithm with different parameters,
$p = k = t =a^\varepsilon$, for an arbitrarily small constant $\varepsilon > 0$, we obtain the following result.

 \begin{theorem}
 \label{theorem7}
 
Invoking  Procedure Proper-Coloring-CC on a graph $G$ with arboricity $a$ with the parameter $p= \lceil  a^{\varepsilon/3} \rceil $, produces a proper $O(a)$-coloring of $G$
within   $O(a^{\varepsilon})$   time.
\end{theorem}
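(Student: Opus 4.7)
The plan is to instantiate Procedure Proper-Coloring-CC with $p = k = t = \lceil a^{\varepsilon/3} \rceil$ (overriding the ``sufficiently large constant'' default in line~2, but keeping the algorithm otherwise unchanged) and to reuse the recursion analysis already carried out for Theorem~\ref{theorem6}, but with this polynomially growing $p$. First I would bound the recursion depth. By the arboricity-halving lemma proved just before Theorem~\ref{theorem6}, after level $i$ every surviving subgraph has arboricity at most $((3+\varepsilon)/p)^{i}\cdot a$, and the recursion terminates as soon as this drops below $p$. Solving $((3+\varepsilon)/p)^{r}\cdot a \le p$ with $p = a^{\varepsilon/3}$ gives
\[
r \;=\; \left\lceil \frac{\log a}{\log(p/(3+\varepsilon))} \right\rceil \;=\; \left\lceil \frac{\log a}{(\varepsilon/3)\log a - \log(3+\varepsilon)} \right\rceil \;=\; O(1/\varepsilon) \;=\; O(1).
\]

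Next I would count the colors. Exactly as in the proof of Theorem~\ref{theorem6}, the product (number of subgraphs) $\times$ (arboricity per subgraph) is multiplied by at most $(3+\varepsilon)$ at each level, so after $r$ levels it is at most $(3+\varepsilon)^{r}\cdot a$. At the leaves each remaining subgraph has arboricity $O(p)$ and is colored with $O(p)$ distinct colors using the Section~\ref{sc:gsln} algorithm; thus the number of colors produced overall is $(3+\varepsilon)^{r}\cdot a / p \cdot O(p) = O((3+\varepsilon)^{r}\cdot a)$. Since $r$ is a constant that depends only on $\varepsilon$, the factor $(3+\varepsilon)^{r}$ is itself a constant, so the total is $O(a)$, as claimed.

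For the running time I would charge each recursive level to its invocation of Procedure Arbdefective-Coloring-CC, which by the lemma preceding the description of Procedure Proper-Coloring-CC runs in $O(t^{2}\log a) = O(a^{2\varepsilon/3}\log a)$ rounds. The leaf step uses the general $O(\alpha)$-round algorithm of Section~\ref{sc:gsln} with $\alpha = O(p) = O(a^{\varepsilon/3})$, which is dominated by the per-level cost. Summing over the $O(1)$ levels and using $\log a = o(a^{\varepsilon/3})$ gives a total of $O(a^{2\varepsilon/3}\log a) = O(a^{\varepsilon})$.

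The main obstacle is the interplay between the two ``$O(1)$'' factors: making $r$ genuinely constant forces $p$ to grow polynomially, which in turn inflates the per-level running time; I must check that the exponent $2\varepsilon/3$ of $p^{2}$ together with the logarithmic slack still fits under $a^{\varepsilon}$, and separately that the $(3+\varepsilon)^{r}$ blow-up in the color bound stays bounded (both hinge on $r = O(1/\varepsilon)$). A minor point to verify inductively is that each recursive call receives an upper bound $\alpha$ that genuinely dominates the arboricity of the subgraph it is invoked on, which follows directly from the arboricity-halving lemma applied level by level, and that the parallel invocations across sibling subgraphs are legitimate because vertex-disjoint subgraphs of a Congested Clique are themselves Congested Cliques.
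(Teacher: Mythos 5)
Your proposal is correct and follows essentially the same route as the paper's proof: constant recursion depth $O(1/\varepsilon)$, a constant-factor color blow-up per level yielding $O(a)$ colors, and a per-level cost of $O(t^2\log a)=O(a^{2\varepsilon/3}\log a)=O(a^{\varepsilon})$ plus an $O(a^{\varepsilon/3})$ leaf step. Your accounting of the color bound via the invariant (number of subgraphs)$\times$(arboricity) $\le (3+\varepsilon)^r a$ is in fact a bit more careful than the paper's one-line version of the same argument.
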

\begin{proof} 
 During the execution of Procedure Proper-Coloring-CC, the number of recursion levels is $O(3/\varepsilon)$, i.e., a constant. In each level, the number of colors increases just by  a constant as well. In each level  Procedure Partial-Orientation-CC is executed, which requires $O(t^2 \log{}a) = O(a^{2 \varepsilon / 3} \log a) = O(a^{\varepsilon})$ rounds. The bottom recursion level requires $O(a^{\varepsilon/3})$ time and produces $O(a^{\varepsilon/3})$-coloring in each of the $O(a^{1- \varepsilon})$ subgraphs of this stage, using  our algorithm from Section 4. Hence, the total running time is $O(a^{\varepsilon})$.  
\end{proof}

\section{MIS}
Our MIS algorithm works in the following way. We invoke procedure Proper-Coloring-CC with $p = a^{1/8}$ instead of a constant. Moreover, we perform recursive calls as long as $\alpha > p^4$, rather than $\alpha > p$. Consequently, there are just four recursion levels, each of which requires $O(t^2 \log a) = O(a^{1/4} \log a)$ time. At the bottom level of the recursion, each subgraph has arboricity $O(\sqrt a)$ and there are $q = O(\sqrt a)$ such subgraphs. Denote these subgraphs $G_1,G_2,...,G_q$. Once Procedure-Coloring-CC terminates, we perform the following loop consisting of $q$ iterations. For each $i = 1,2,...,q$, we compute an MIS locally in $G_i$. At this stage all vertices of the subgraph $G_i$ have already learnt it during the execution of Procedure Proper-Coloring-CC, so this is indeed possible. Once vertices join MIS, they send their neighbors a message telling them not to join. Each vertex that receives a message from a neighbor in the MIS, broadcasts to all vertices in the graph that it is outside the MIS. Once these messages are received, each vertex deletes such vertices that have neighbors in the MIS from each $G_1,G_2,...,G_q$ in its local memory. This completes the description of an iteration. Once all iterations complete, we have an MIS of the entire graph. The pseudocode of the algorithm is provided below.
Next, we analyze its correctness and running time.
  \begin{algorithm}
\caption{MIS in Congested Clique}
\label{alg:Recursion-Arbdefective-Coloring-CC}
\begin{algorithmic}[1]
\Procedure{MIS-CC}
{$G' , \alpha$}
\State initially, M = $\emptyset$

 \State compute a decomposition into $q = O(\sqrt \alpha)$ subgraphs $G_1,G_2,...,G_q$ of arboricity $O(\sqrt \alpha)$
 
 \State Each vertex in each $G_i$, $i = 1,2,...,q$, learns the subgraph $G_i$ using our general algorithm from Secion \ref{sc:gsln}.
 
\For {i = 1,2,...,q}

\State compute an MIS of $G_i$ locally and add its vertices to $M$

\State each vertex of $G_i$ that is in $M$ broadcasts this information to all vertices

\State each vertex that has a neighbor in $M$ broadcasts this infomation to all vertices

\State each vertex removes in its local memory the vertices of $G_{i + 1},G_{i + 2},...,G_q$ that have neighbors in $M$
\EndFor
\State {\bf end for}

\State return $M$

\EndProcedure
\end{algorithmic}
\end{algorithm}

\begin{theorem}
Procedure MIS-CC computes a proper MIS of the input graph.
\end{theorem}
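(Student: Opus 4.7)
The plan is to verify the two defining properties of a maximal independent set separately. The key structural fact I would rely on is that the subgraphs $G_1, G_2, \ldots, G_q$ returned at the bottom level of Procedure Proper-Coloring-CC form a vertex partition of $V$ (they are the color classes of a proper coloring), and each $G_i$ is the subgraph of $G$ induced on its color class, so every edge of $G$ whose endpoints share a color lies entirely in the corresponding $G_i$.

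First I would argue independence. Let $u, v$ be two distinct vertices of $M$, added in iterations $i$ and $j$ respectively, WLOG with $i \leq j$. If $i = j$, both belong to the MIS computed locally on the reduced $G_i$; since $G_i$ is an induced subgraph of $G$ and the local MIS is independent inside it, $u$ and $v$ are not adjacent in $G$. If $i < j$, then right after iteration $i$, step 7 broadcasts $u$'s membership in $M$, step 8 lets every neighbor of $u$ in $G$ identify itself (using its own adjacency list) and broadcast, and step 9 removes each such neighbor from everyone's local copies of $G_{i+1}, \ldots, G_q$. Hence if $v$ were a neighbor of $u$ in $G$, it would have been removed from $G_j$ before iteration $j$ and could not join $M$ in iteration $j$.

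Next I would argue maximality. For any $v \in V$ let $i$ be the unique index with $v \in G_i$. If $v$ is removed before iteration $i$, then by the removal rule some neighbor of $v$ has already joined $M$ in a previous iteration, as required. Otherwise $v$ survives until iteration $i$ and participates in the local MIS computation on the reduced $G_i$; by maximality of that local MIS, either $v \in M$ or some neighbor $w \in G_i$ of $v$ joins $M$. Since $G_i$ is an induced subgraph of $G$, $w$ is also a neighbor of $v$ in $G$.

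The main subtlety, and where I would be most careful, is handling edges of $G$ that do not lie inside any single $G_i$: an edge $(u, v)$ with $u \in G_i$, $v \in G_j$, $i \neq j$, is invisible to all local MIS computations. I would verify explicitly that the two broadcasts in steps 7--8, together with the fact that each vertex knows its own adjacency list in $G$, suffice to propagate to every node of the clique both the identity of every newly-selected $M$-vertex and the identity of every vertex that now has a neighbor in $M$. This maintains the invariant that the reduced $G_j$ at the start of iteration $j$ contains no vertex adjacent in $G$ to any vertex already in $M$, which is precisely what both arguments above require.
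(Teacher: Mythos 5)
Your proof is correct and follows essentially the same approach as the paper's: both rest on the invariant that the broadcast-and-remove steps ensure no vertex adjacent to $M$ survives into a later $G_j$, combined with the fact that each $G_i$ is a vertex-induced subgraph of the partition. You merely verify independence and maximality separately where the paper packages both into a single induction on the iteration index, and your explicit treatment of cross-subgraph edges is a welcome clarification of a point the paper leaves implicit.
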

\begin{proof}
We prove that for $i = 1,2,...,q$, after iteration $i$, the subgraph of $G'$ induced by vertices of $G_1,G_2,...,G_i$ has a proper MIS. The proof is by induction on $i$.\\
{\bf Base ($i = 1$):} After the first iteration an MIS of $G_1$ is computed and added to $M$.\\
{\bf Step:} In the beginning of iteration $i$, by induction hypothesis, $M$ contains an MIS of the subgraph induced by vertices of $G_1,G_2,...,G_{i - 1}$. In iteration $i - 1$, all neighbors of $M$ announced this to all other vertices, and as a result were removed from $G_i$ in the local memories of processors. Consequently, the MIS that is computed in iteration $i$ on line 6 of the procedure does not have neighbors in the set $M = M_{i - 1}$ produced in the end of iteration $i - 1$. Denote the MIS computed in iteration $i$, step 6, by $M'$. It follows that $M_{i - 1} \cup M'$ is an independent set. Moreover, any vertex in $G_1,G_2,...,G_{i - 1}$ is at distance at most $1$ from some vertex in $M_{i-1}$, by induction hypothesis. Any vertex in $G_i$ is at distance $1$ from some vertex in $M_{i-1}$ (if it was removed from $G_i$), or at distance at most $1$ from some vertex in $M'$ (if it remained in $G_i$). Thus the set $M$ computed after $i$ iterations, $M_i = M_{i - 1} \cup M'$, is an MIS of vertices of $G_1,G_2,...,G_i$.
\end{proof}

\begin{theorem}
The running time of Procedure MIS-CC is $O(\sqrt a)$.
\end{theorem}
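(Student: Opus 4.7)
The plan is to bound the running time by summing the costs of the three distinct phases of Procedure MIS-CC: (i) the invocation of Proper-Coloring-CC on line~3 which builds the decomposition, (ii) the subgraph-learning step on line~4, and (iii) the for-loop on lines~5--10. I will show each phase takes $O(\sqrt a)$ rounds, so the sum is also $O(\sqrt a)$.

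For phase (i), I would re-run the analysis underlying Theorem~\ref{theorem7}, but with the modified parameters stated in the description of MIS-CC: $p = a^{1/8}$ and termination when $\alpha \leq p^4 = \sqrt a$. Since each recursion level of Proper-Coloring-CC reduces the arboricity bound by a multiplicative factor $p/(3+\varepsilon) = \Theta(a^{1/8})$ (by the arbdefect lemma for Arbdefective-Coloring-CC), starting from $\alpha = a$ the recursion terminates in exactly $4$ levels. Each level invokes Arbdefective-Coloring-CC with $t = p = a^{1/8}$, which costs $O(t^2 \log a) = O(a^{1/4} \log a)$ rounds, so all four levels together cost $O(a^{1/4} \log a) = o(\sqrt a)$. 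After termination, the graph is decomposed into $q = p^4 = \Theta(\sqrt a)$ vertex-disjoint subgraphs of arboricity $O(\sqrt a)$. Phase (ii) then takes $O(\sqrt a)$ rounds, because the general algorithm of Section~\ref{sc:gsln} learns a graph of arboricity $r$ in $O(r)$ rounds, and since the $G_i$ are vertex-disjoint sub-cliques the $q$ invocations run in parallel without interference.

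The main obstacle, and the only part where an honest argument is needed, is phase (iii): I must argue that each of the $q = O(\sqrt a)$ iterations of the for-loop costs only $O(1)$ rounds. Line~6 is purely local computation (each vertex already has $G_i$ in memory from phase (ii), suitably pruned by earlier iterations), so it costs zero communication. Lines~7 and~8 are the communication steps, and I would analyze them as a single routing instance to which Lenzen's scheme (Lemma~\ref{lemlenzen}) applies: if $k$ vertices announce their MIS-membership to everyone, this is an instance in which each of $k \leq n$ sources has exactly $n$ messages to send and each of the $n$ destinations receives exactly $k \leq n$ messages, hence within Lenzen's $O(n)$ send/receive budget. The same argument handles the announcement of line~8. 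Line~9 is again local. Thus each iteration is $O(1)$ rounds and the loop costs $O(q) = O(\sqrt a)$.

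Putting the three phases together yields $O(a^{1/4}\log a) + O(\sqrt a) + O(\sqrt a) = O(\sqrt a)$, as claimed. The only place one might worry about a hidden logarithmic factor is phase (i), but since $a^{1/4}\log a = o(\sqrt a)$, it is absorbed. I expect no further subtleties beyond verifying the routing instance of phase (iii) carefully against Lenzen's hypotheses.
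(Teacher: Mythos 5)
Your proposal is correct and follows essentially the same three-phase accounting as the paper: four recursion levels of the modified Proper-Coloring-CC costing $O(a^{1/4}\log a)$, subgraph learning in $O(\sqrt a)$ rounds on the vertex-disjoint sub-cliques, and $O(1)$ rounds per iteration of the loop. The only cosmetic difference is that you invoke Lenzen's routing scheme for lines 7--8, whereas the paper observes more simply that announcing a single $O(\log n)$-bit message to all vertices is a one-round broadcast over the clique edges; both yield the same bound.
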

\begin{proof}
The decomposition in line 3 is obtained using Procedure Proper-Coloring-CC that is invoked with $p = a^{1/8}$, instead for a constant. It is invoked for 4 recursion levels. Consequently, in the bottom level, the arboricity of each subgraph is $O(\sqrt a)$. Hence, the running time of step 3 is $O(t^2 \log a + \sqrt a) = O(\sqrt a)$. In line 4, each subgraph of $G_1,G_2,...,G_q$ is learnt by all its vertices. This is performed in parallel for $i = 1,2,...,q$, and requires $O(\sqrt a)$ time, since the arboricity of each subgraph is $O(\sqrt a)$. Each iteration of the loop in lines 5 - 10 requires $O(1)$ rounds. Indeed, the computation of MIS is local and does not require communication rounds whatsoever. Broadcasting information of vertices in the MIS requires 1 round. (Each vertex broadcasts a message of $O(\log n)$ bits containing its ID.) Broadcasting information about vertices that have neighbors in the MIS also requires 1 round. Therefore, the running time of $q$ iterations is $O(q) = O(\sqrt a)$. This is also the running time of the entire algorithm.
\end{proof}

\bibliographystyle{alpha}

\clearpage
\pagenumbering{roman}
\appendix
\centerline{\LARGE\bf Appendix}
\section{Preliminaries - Basic Procedures}
\AppA

\end{document}